\documentclass[a4paper,USenglish,cleveref,autoref]{lipics-v2021}

\bibliographystyle{plainurl}
\usepackage{tikz}

\title{Determining a Slater Winner is Complete for Parallel Access to NP} 

\titlerunning{Determining a Slater Winner is Complete for Parallel Access to NP}

\author{Michael Lampis}{Université Paris-Dauphine, PSL University, CNRS, LAMSADE, 75016, Paris, France}{michail.lampis@lamsade.dauphine.fr}{https://orcid.org/0000-0002-5791-0887}{}

\authorrunning{Michael Lampis}

\Copyright{Michael Lampis}

\ccsdesc[500]{Theory of computation $\rightarrow$ Computational complexity and
cryptography $\rightarrow$ Problems, reductions and completeness}

\keywords{Slater winner, Feedback Arc Set, Tournaments}

\category{}


\relatedversiondetails{Full version}{https://arxiv.org/abs/2103.16416}

\supplement{}

\funding{}

\acknowledgements{I am grateful to Jérôme Lang for letting me know about this problem and for correctly conjecturing that it is complete for $\thtp$.}

\nolinenumbers 


\EventEditors{Petra Berenbrink and Benjamin Monmege}
\EventNoEds{2}
\EventLongTitle{39th International Symposium on Theoretical Aspects of Computer Science (STACS 2022)}
\EventShortTitle{STACS 2022}
\EventAcronym{STACS}
\EventYear{2022}
\EventDate{March 15--18, 2022}
\EventLocation{Marseille, France}
\EventLogo{}
\SeriesVolume{219}
\ArticleNo{5}

\newcommand{\thtp}{\ensuremath \Theta_2^p}

\newcommand{\dout}{\ensuremath d_{\textrm{out}}} \newcommand{\din}{\ensuremath
d_{\textrm{in}}}

\begin{document}

\maketitle

\begin{abstract}

We consider the complexity of deciding the winner of an election under the
Slater rule. In this setting we are given a tournament $T=(V,A)$, where the
vertices of $V$ represent candidates and the direction of each arc indicates
which of the two endpoints is preferable for the majority of voters. The
\emph{Slater score} of a vertex $v\in V$ is defined as the minimum number of
arcs that need to be reversed so that $T$ becomes acyclic and $v$ becomes the
winner. We say that $v$ is a Slater winner in $T$ if $v$ has minimum Slater
score in $T$.

Deciding if a vertex is a Slater winner in a tournament has long been known to
be NP-hard. However, the best known complexity upper bound for this problem is
the class $\thtp$, which corresponds to polynomial-time Turing machines with
parallel access to an NP oracle. In this paper we close this gap by showing
that the problem is $\thtp$-complete, and that this hardness applies to
instances constructible by aggregating the preferences of $7$ voters.  

\end{abstract}

\section{Introduction}

Voting rules, which are a topic of central interest in computational social
choice, are schemes which allow us to aggregate the preferences of a set of
voters among a set of candidates, in order to select a single winner who is
most compatible with the voters' wishes. The main challenge of this area is
that, even if the preferences of each voter are internally consistent (that is,
each voter has a complete ranking of all candidates), it is easy to run into
situations such as the famous Condorcet paradox where collective preferences
are cyclic and hence no clear winner exists. Many rules have therefore been
proposed to deal with this situation and select a winner who is as acceptable
as possible to as many voters as possible. 

In this paper we investigate the computational complexity of a classical and
very natural such voting scheme that is often referred to as the Slater rule.
Intuitively, the idea of the Slater rule is the following: we consider every
possible pair of candidates in our pool $a,b$ and check whether the majority of
voters prefers $a$ or $b$. This allows us to construct a tournament $T$ that
depicts the results of each pariwise matchup between candidates. If $T$ is
transitive (that is, acyclic), then picking a winner is easy. If not, the
Slater rule is that we should select as the winner a candidate who is the
winner of a transitive tournament $T'$ that is at minimum edit distance from
$T$. In other words, a candidate $c$ is a Slater winner if the number of
pairwise matchups that we need to ignore to make $c$ a clear winner is
minimized. 

More formally, the problem we consider is defined as follows. We are given a
set $V$ of $n$ candidates and the preferences of $m$ voters, where each voter's
preferences are given as a total ordering of $V$. We determine a pair-wise
relation on $V$ as follows: for $a,b\in V$ we say that $a$ wins against $b$ if
the majority of voters prefers candidate $a$ over candidate $b$.  In this way,
assuming that there are no ties (which is guaranteed if the number of voters is
odd), we can construct a tournament $T=(V,A)$, where we have the arc $b\to a$
(that is $(b,a)\in A$) if $a$ wins against $b$. In this setting, the Slater
score of a candidate $c$ is the minimum number of arcs of $T$ that need to be
reversed so that $T$ becomes transitive (acyclic) with $c$ being placed last
(that is, with $c$ being a sink). The Slater winner of a tournament is a
candidate with minimum Slater score. Intuitively, a candidate $c$ is a Slater
winner if there exists a linear ordering $\prec$ of the candidates that ranks
$c$ as the winner and is as compatible as possible with the voters' aggregated
preferences, in the sense that the edit distance between $\prec$ and $T$ is
minimum.

The notion of Slater winner is very well-studied and can be seen as a special
case of Kemeny voting. Indeed, in Kemeny voting we construct a weighted
tournament where the weight of the arc $b\to a$ denotes the margin of victory
of $a$ over $b$. In this sense, the Slater system corresponds to a version of
Kemeny voting where we only retain as information which of the two candidates
would win a head-to-head match-up, but ignore the margin of victory.  In other
words, Slater voting is the special case of Kemeny voting where the arcs are
unweighted. For more information about these and other related voting systems,
we refer the reader to \cite{0001CELP16}.

The main question we are interested in in this paper is the computational
complexity of determining if a vertex $v$ of a tournament is a Slater winner.
It has long been known that this question is at least NP-hard \cite{Hudry10}.
Indeed, it is not hard to see that if we had an oracle for the Slater problem
we would be able to produce in polynomial time an ordering of any tournament in
a way that minimizes the number of inversed arcs. This would solve the
\textsc{Feedback Arc Set} problem, which is known to be NP-complete on
tournaments \cite{AilonCN08,Alon06,CharbitTY07,Conitzer06}. On the other hand,
membership of this problem in NP is not obvious. The best currently known upper
bound on its complexity is the class $\thtp$, shown by Hudry
\cite{CharonH10,Hudry10}.

The class $\thtp$ seems like a natural home for the Slater problem. As a
reminder, this class captures as a model of computation  Turing
machines that run in polynomial time and which are allowed to use an NP oracle either a polynomial number of
times non-adaptively (that is, with questions not being allowed to depend on
previous answers), or a logarithmic number of times adaptively.  Hence, this
class is often called ``Parallel Access to NP'' and written as
$\textrm{P}^{\textrm{NP}}_{||}$.  Intuitively, solving the Slater problem
requires us to calculate exactly a value that is NP-hard to compute (the
minimum feedback arc set of a tournament). This can be done either by asking
polynomially many non-adaptive NP queries to an oracle (for each $k=1,2,\ldots$
we ask if the feedback arc set has size at most $k$), or a logarithmic number
of adaptive queries (where we essentially perform binary search). It has
therefore been conjectured that determining if a candidate is a Slater winner
is not just NP-hard, but $\thtp$-complete \cite{0001BH16,0001CELP16,Hudry10}.
We recall that $\thtp$ is strongly suspected to be a much larger class than NP
-- indeed, because $\thtp$ contains all of the so-called Boolean hierarchy of
classes, it is known that if it were the case that $\thtp=NP$, then the
polynomial hierarchy would collapse \cite{ChangK96}.  Hence, the difference
between the known upper and lower bounds on the complexity of determining a
Slater winner is not trivial.

The result we present in this paper settles this problem. We confirm the
conjecture that determining the Slater winner of a tournament is indeed
$\thtp$-complete.  This places Slater voting in the same class as related
voting schemes, such as Kemeny \cite{HemaspaandraSV05}, Dodgson
\cite{HemaspaandraHR97}, and Young \cite{RotheSV03}. It also places it in the
same class as the Slater rule used in \cite{EndrissH15} for a more general
judgment aggregation problem. We prove this result by modifying the reduction
of Conitzer \cite{Conitzer06}, which showed that \textsc{Feedback Arc Set} on
tournaments is NP-complete. The main difference is that, rather than reducing
from \textsc{SAT}, we need to reduce from a $\thtp$-complete variant, where we
are looking for a maximum weight satisfying assignment that sets a certain
variable to True. This forces us to significantly complicate the reduction
because we need to encode in the objective function not only the number of
satisfied clauses but also the weight of the corresponding assignment.

Having settled the worst-case complexity of the problem in general, we go on to
consider a related question: what is the minimum number of voters for which
determining a Slater winner is $\thtp$-complete? The motivation behind this
question is that, even though any tournament can be constructed by aggregating
the preferences of a large enough number of voters\footnote{This is a classical
result known in the literature as McGarvey's theorem.}, if the number of voters
is limited, some tournaments can never arise. Hence the problem may conceivably
be easier if the number of voters is bounded.  In the case of the Slater rule,
Bachmeier et al.\cite{BachmeierBGHKPS19} have shown that determining the Slater
winner remains NP-hard for $7$ voters. By reusing and slightly adjusting their
arguments we improve their complexity lower bound to $\thtp$-completeness for
$7$ voters.

\section{Definitions and Preliminaries}

A tournament is a directed graph $G=(V,A)$ such that for all $x,y\in V$,
exactly one of the arcs $(x,y), (y,x)$ appears in $A$. A feedback arc set (fas)
of a digraph $G=(V,A)$ is a set of arcs $A'\subseteq A$ such that deleting $A'$
from $G$ results in an acyclic digraph. If $G$ is a tournament and $A'$ is a
fas of $G$, then the tournament obtained from $G$ by reversing the direction of
all arcs of $A'$ is acyclic (or transitive). We will say that a total ordering
$\prec$ of the vertices of a digraph $G=(V,A)$ \emph{implies} the fas $S=\{
(x,y)\ |\ (x,y)\in A,\ y\prec x\}$ (in the sense that $S$ is the set of arcs
that disagree with the ordering).  We will say that an ordering of $V$ is
optimal if the fas it implies has minimum size.

Given a digraph $G=(V,A)$ and $v\in V$, we say that $v$ is a Slater winner if
for some $k\ge 0$ the following hold: (i) there exists a fas $S\subseteq A$ of
$G$, such that $v$ is a sink of $G-S$ and $|S|=k$ (ii) every fas of $G$ has
size at least $k$. If $v$ is a Slater winner in $G=(V,A)$, then a winning
ordering for $v$ is a linear ordering of $V$ that places $v$ last and implies a
fas of $G$ of minimum size. 

In a digraph $G=(V,E)$, a set $M\subseteq V$ is a module if the following
holds: for all $x,y\in M$ and $z\not\in M$ we have $(x,z)\in E \leftrightarrow
(y,z)\in E$ and  $(z,x)\in E \leftrightarrow (z,y)\in E$. In other words, every
vertex outside $M$ that has an arc to (respectively from) a vertex of $M$, has
arcs to (respectively from) all of $M$. The following lemma, given by Conitzer
\cite{Conitzer06} with slightly different terminology, states that the vertices
of a module can, without loss of generality, always be ordered together. We say
that the vertices of a set $S$ are contiguous in an ordering $\prec$ if there
are no $x,y\in S$, $z\not\in S$ such that $x\prec z\prec y$.

\begin{lemma}\label{lem:modules} Let $G=(V,A)$ be a digraph, $v\in V$ a vertex,
and suppose we have a partition of $V$ into $k$ non-empty modules $V=M_1\uplus
M_2\uplus\ldots\uplus M_k$.  If $v$ is a Slater winner of $G$, then there
exists a winning ordering for $v$ such that for all $i\in[k]$, the vertices of
$M_i$ are contiguous.\end{lemma}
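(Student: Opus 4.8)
The plan is to take an arbitrary winning ordering for $v$ and to \emph{round} it into one in which every module is contiguous, without increasing the size of the implied fas and without moving $v$ out of the last position. Write $k$ for the minimum fas size. Since $v$ is a Slater winner, there is a winning ordering $\prec$ with $v$ placed last whose implied fas has size exactly $k$, and, crucially, every ordering of $V$ implies a fas of size at least $k$.

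The key structural observation concerns the cost of a single module $M_i$ against the rest of the graph. If $z\notin M_i$, then by the module property $z$ relates uniformly to all of $M_i$: either $(z,x)\in A$ for all $x\in M_i$ or for none, and likewise for the arcs $(x,z)$. Consequently, if we fix both the relative order of $V\setminus M_i$ and the relative order within $M_i$, then the \emph{cross-cost} of $z$ (the number of implied fas arcs between $z$ and $M_i$) depends only on how many vertices of $M_i$ precede $z$, and it is a monotone (indeed linear) function of that quantity. At the same time the internal fas of $M_i$ and the fas among $V\setminus M_i$ do not change when we slide $M_i$ around as a block.

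First I would establish a one-module rounding step. Let $z_1\prec\cdots\prec z_N$ be the vertices outside $M_i$, put $p_{z_j}=\#\{x\in M_i: x\prec z_j\}$, and for each threshold $t\in\{1,\dots,|M_i|\}$ let $g_t$ be the number of $z_j$ with $p_{z_j}<t$; consider the contiguous ordering that preserves all relative orders but inserts the whole block $M_i$ right after the first $g_t$ outside-vertices. Using the identity $p_{z_j}=\sum_{t=1}^{|M_i|}[p_{z_j}\ge t]$ together with the monotonicity of $p_{z_j}$ in $j$, a short computation shows that the total cross-cost of the original (spread-out) ordering is exactly the average, over $t$, of the total cross-costs of these $|M_i|$ contiguous placements. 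In particular some placement is no worse than $\prec$, so there is a contiguous placement of $M_i$ whose total fas is at most that of $\prec$.

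The main obstacle is keeping $v$ last, and this is exactly where the Slater-winner hypothesis enters. Since every ordering implies a fas of size at least $k$ and $\prec$ attains $k$, none of the threshold placements can have total fas below $k$; combined with the averaging identity --- which says that the original total fas $k$ is the \emph{average} of the total fas values of the $|M_i|$ threshold placements --- this forces \emph{every} threshold placement to have total fas exactly $k$. I would apply this first to the module $M_{i_0}$ containing $v$: because $v$ is last, no outside vertex has all of $M_{i_0}$ before it, so the extreme threshold $t=|M_{i_0}|$ places the whole block at the very end, with $v$ (still last within the block, as its internal order is preserved) last overall; by the equi-optimality just noted this placement is optimal. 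I would then round the remaining modules one at a time. Two facts make the iteration go through. First, once a module is contiguous all of its vertices have the same number of $M_i$-vertices before them, so every threshold placement of a later module $M_i$ leaves that block unsplit and hence keeps earlier modules contiguous. Second, since $M_{i_0}$ sits at the end, every inserted block is preceded by all of $M_{i_0}$, so $v$ remains last throughout. After processing all modules we obtain a contiguous ordering with $v$ last and implied fas of size $k$, which is the desired winning ordering.
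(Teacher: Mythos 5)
Your proof is correct, but it follows a genuinely different route from the paper's. The paper argues locally: it takes an optimal winning ordering, picks two vertices $x\prec y$ of the same module lying in different maximal contiguous blocks at minimum distance (so the set $Z$ between them contains no vertex of that module), and uses optimality to show that the in-degree and out-degree of the module's vertices toward $Z$ must be equal --- otherwise sliding $x$ forward or $y$ backward would strictly improve the ordering. It then merges the two blocks at no cost and repeats until each module forms a single block. Your argument is instead a global averaging (``threshold rounding'') step applied once per module: the cost of the spread-out ordering is exactly the average of the costs of the $|M_i|$ contiguous threshold placements, and since every ordering costs at least the optimum, \emph{all} threshold placements are simultaneously optimal. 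This equi-optimality is a strictly stronger structural fact than the paper extracts, and it lets you choose the placement freely --- which is exactly what you exploit to keep $v$ last (process $v$'s module first with the extreme threshold, then note that every later insertion lands before the suffix $M_{i_0}$) and to preserve contiguity of already-processed modules (a contiguous block has a constant $p$-value, so no threshold splits it). The paper's exchange argument is shorter and handles the ``$v$ stays last'' issue only implicitly; yours needs one pass over the modules rather than an unbounded number of block merges, and treats that constraint explicitly, which is arguably where the real content of the lemma lies. One wording slip: you write that ``every inserted block is preceded by all of $M_{i_0}$,'' which says the opposite of what you need and of what your own justification establishes; since $M_{i_0}$ is a suffix, every vertex of $M_{i_0}$ has all of $M_i$ before it, so all of $M_{i_0}$ lies \emph{after} the insertion point, i.e.\ every inserted block \emph{precedes} $M_{i_0}$. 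With that phrase corrected, the argument is sound.
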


\begin{proof}

Suppose $k>1$ (otherwise the claim is trivial) and consider an ordering $\prec$
that is winning for $v$. We will say that a set of vertices $S\subseteq V$ is a
\emph{block} of $\prec$ if (i) $S\subseteq M_i$ for some $i\in\{1,\ldots,k\}$;
(ii) $S$ is contiguous; (iii) $S$ is maximal, that is, adding any vertex to $S$
violates one of the two preceding properties.

If the number of blocks is equal to $k$ we are done, as each block is equal to
a module so we have an ordering where each module is contiguous. If we have at
least $k+1$ blocks, we will explain how to edit the ordering so that it remains
winning for $v$, it implies a fas of the same size, and the number of blocks
decreases. Repeating this process until we have $k$ blocks completes the proof.

Consider two vertices $x,y$, with $x\prec y$, which belong to the same module,
say $x,y\in M_1$, but in distinct blocks. Among all such pairs, select $x,y$ so
that their distance in the ordering, that is, the size of the set $Z=\{ z\ |\
x\prec z\prec y\}$ is minimized. Let $X,Y$ be the blocks that contain $x,y$
respectively. Note that by the selection of $x,y$ we have that $x$ is the last
vertex of $X$, $y$ is the first vertex of $Y$, $X\cup Y\subseteq M_1$, and
$M_1\cap Z=\emptyset$.

Let $\dout^Z(x)$ (respectively $\din^Z(x)$) be the out-degree (respectively
in-degree) of $x$ towards the set $Z$. Because $M_1$ is a module, all vertices
of $M_1$ have the same in-degree and out-degree towards $Z$, and in particular,
$\dout^Z(x)=\dout^Z(y)$ and $\din^Z(x)=\din^Z(y)$. Now, if
$\din^Z(x)>\dout^Z(x)$, we can obtain an ordering that implies a smaller fas by
placing $x$ immediately after the last vertex of $Z$. This would contradict the
optimality of $\prec$, so it must be impossible. Similarly, if
$\din^Z(x)<\dout^Z(x)$, we have $\din^Z(y)<\dout^Z(y)$, and we can obtain a
strictly better ordering by placing $y$ immediately before the first vertex of
$Z$, contradiction. We conclude that $\din^Z(x)=\din^Z(y)$. Therefore, moving
all the vertices of $X$ so that they appear immediately after the last vertex
of $Z$ produces an ordering which is equally good as the current one, is still
winning for $v$, and has a smaller number of blocks.  \end{proof}

\subsection{Complexity}

We recall the class $\thtp$ which is known to have several equivalent
characterizations, including $\textrm{P}^{\textrm{NP}[\log n]}$ (P with the
right to make $O(\log n)$ queries to an NP oracle), $\textrm{L}^{\textrm{NP}}$
(logarithmic-space Turing machines with access to an NP oracle), and
$\textrm{P}^{\textrm{NP}}_{||}$ (P with parallel non-adaptive access to an NP
oracle). We refer the reader to \cite{HemaspaandraHR97b} for more information on
this class. In \cite{Wagner87} it was shown that the following problem is
$\thtp$-complete: given a graph $G$, is the maximum clique size $\omega(G)$
odd?  In \cite{Haan19} it is mentioned that the following problem, called
\textsc{Max Model}, is $\thtp$-complete: given a satisfiable CNF formula $\phi$
containing a special variable $x$, is there a satisfying assignment of $\phi$
that sets $x$ to True and has maximum Hamming weight (among all satisfying
assignments), where the Hamming weight of an assignment is the number of
variables it sets to True. 

We will use as a starting point for our reduction a variant of \textsc{Max
Model} which we show is $\thtp$-complete below. The main difference between
this variant and the standard version is that we assume that the given formula
is satisfied by the assignment that sets all variables to False.

\begin{lemma}\label{lem:start} The following problem is $\thtp$-complete. Given
a 3-CNF formula $\phi$ containing a distinguished variable $x$, such that
$\phi$ is satisfied by the all-False assignment, decide if there exists a
satisfying assignment for $\phi$ that sets $x$ to True and has maximum weight
among all satisfying assignments.  \end{lemma}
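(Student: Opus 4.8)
The plan is to establish both directions: membership in $\thtp$ and $\thtp$-hardness via a reduction from the standard \textsc{Max Model} problem.

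For membership, I would note that the relevant information is captured by two quantities: $W^*$, the maximum weight of any satisfying assignment of $\phi$, and $W_x^*$, the maximum weight of a satisfying assignment that additionally sets $x$ to True. Both can be pinned down by a polynomial number of \emph{non-adaptive} NP queries of the form ``does $\phi$ have a satisfying assignment of weight at least $k$?'' and ``does $\phi$ have a satisfying assignment of weight at least $k$ that sets $x$ to True?'', ranging over all $k$ between $0$ and the number of variables. Since $\phi$ is guaranteed satisfiable (the all-False assignment works), $W^*$ is well defined, and the instance is a Yes-instance precisely when $W_x^* \ge W^*$. As this is a polynomial-time computation with parallel access to NP, the problem lies in $\thtp = \textrm{P}^{\textrm{NP}}_{||}$.

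For hardness I would reduce from \textsc{Max Model} on a satisfiable CNF formula $\psi$ with distinguished variable $x$. The target formula $\phi$ must differ in three respects: it must be in 3-CNF, it must be satisfied by the all-False assignment, and the property ``maximum-weight satisfying assignment with $x$ True'' must be preserved. The engine of the reduction is a fresh \emph{selector} variable: I would replace each clause $C$ of $\psi$ by $C \vee \neg s$, so that setting the selector to False trivially satisfies every modified clause. To force the optimum to set the selector to True, thereby reactivating the genuine constraints of $\psi$, I would make it carry large weight by replicating it into $N=n+1$ copies $s_1,\dots,s_N$ tied together by equality clauses $(\neg s_i \vee s_{i+1})$ and $(s_i \vee \neg s_{i+1})$. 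Any assignment with the selectors True then outweighs anything achievable with them False, so every maximum-weight assignment sets them True and is forced to satisfy $\psi$. The all-False assignment satisfies $\phi$ because it makes every $\neg s_i$ True. To obtain 3-CNF I would split each long clause $(\neg s_1 \vee \ell_1 \vee \cdots \vee \ell_t)$ by the usual chain construction, taking care to place $\neg s_1$ as the \emph{first} literal: the standard sub-clauses then contain the negated chain variables, so the all-False assignment (which sets all chain variables to False) satisfies them automatically, while when the selector is True the gadget enforces exactly $\ell_1 \vee \cdots \vee \ell_t = C$.

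The step I expect to be the main obstacle is controlling the \emph{weight} contributed by the auxiliary variables introduced during the 3-CNF conversion: their values are not determined by the underlying assignment of $\psi$, so a naive construction would let them break ties within the class of maximum-weight $\psi$-assignments, possibly in a way that hides an $x$-True witness. I would neutralize this by giving each auxiliary chain variable $y$ a complementary partner $\bar y$ together with the clauses $(y \vee \bar y \vee \neg s_1)$ and $(\neg y \vee \neg \bar y)$. Both are satisfied by the all-False assignment (via $\neg s_1$, and via the two negative literals, respectively), while whenever the selector is True they jointly force exactly one of $y, \bar y$ to be True; hence the auxiliary variables contribute the same fixed amount to every maximum-weight assignment. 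Consequently, among assignments with the selector True the weight of $\phi$ equals the weight of the underlying assignment of $\psi$ plus a constant, so the maximum-weight satisfying assignments of $\phi$ are exactly the extensions of the maximum-weight satisfying assignments of $\psi$, with the value of $x$ preserved; this yields the correctness of the reduction in both the Yes- and No-cases. All clauses produced have size at most three (shorter ones may be padded by repeating a literal if exact 3-CNF is desired), and the construction is clearly polynomial, completing the argument.
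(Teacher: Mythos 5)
Your proof is correct, but it takes a genuinely different route from the paper's. For hardness, you reduce directly from \textsc{Max Model} (which the paper cites as $\thtp$-complete via \cite{Haan19}), enforcing the two extra restrictions generically: a block of $n+1$ linked selector variables whose weight forces optimal assignments to ``switch on'' the real constraints, a Tseitin-style chain split to reach 3-CNF, and complementary-partner variables so that the chain variables contribute a fixed weight to every selector-True satisfying assignment --- this last gadget is exactly the right fix for the one genuine danger in this approach (auxiliary variables breaking ties among maximum-weight assignments). The paper instead reduces from Wagner's problem of deciding whether the maximum independent set of a graph has odd size \cite{Wagner87}: each vertex is represented by $n+1$ duplicate variables (so that set size dominates the weight), edges give binary exclusion clauses, and a chain of parity variables $y_1,\ldots,y_n$ computes the parity of the chosen set, with $y_n$ distinguished. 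What each approach buys: the paper's gadgets are natively at most ternary and all-False-satisfiable, so no clause splitting or weight-neutralization bookkeeping is ever needed; your reduction is more modular, showing that the restrictions of the lemma (3-CNF, all-False satisfiable) can be imposed on \textsc{Max Model} itself without changing the source problem, at the cost of the extra selector/partner machinery. You also spell out the $\textrm{P}^{\textrm{NP}}_{||}$ membership argument (comparing $W^*$ and $W^*_x$ via non-adaptive threshold queries), which the paper leaves implicit; that is a welcome addition, and both write-ups are valid proofs of the lemma.
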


\begin{proof}

We start with a graph $G=(V,E)$ for which the question is if the maximum
independent set has odd size (clearly this is equivalent to the question of
deciding if the maximum clique has odd size by taking the complement of $G$, so
our starting problem is $\thtp$-complete \cite{Wagner87}).  Let $|V|=n$ and
suppose $V=\{v_1,\ldots,v_n\}$.  We construct a formula $\phi$ as follows: for
each $i\in\{1,\ldots,n\}$ we build $(n+1)$ variables $x_i^1,\ldots,x_i^{n+1}$
and for each $j,k\in \{1,\ldots,n+1\}$ we add the clause $(x_i^j \to x_i^{k})$;
for each $(v_{i_1},v_{i_2})\in E$, for each $j,k\in \{1,\ldots,n+1\}$ we add
the clause $(\neg x_{i_1}^j\lor \neg x_{i_2}^{k})$; we construct $n$ variables
$y_1,\ldots, y_n$ and clauses that represent the constraints $(y_1=x_1^1)$, and
for each $i\in\{2,\ldots,n\}$, $(y_i=y_{i-1}\oplus x_i^1)$. We set $y_n$ as the
distinguished variable of $\phi$.  The formula construction can clearly be
carried out in polynomial time, and no clause has size more than three.
Furthermore, setting everything to False satisfies all clauses. Intuitively,
for each vertex we have constructed $n+1$ variables that will be set to True if
we take this vertex in the independent set. The first set of clauses ensures
that we make a consistent choice among the copies; the second set that we
indeed select an independent set; and the third calculates the parity of its
size.

We now observe that independent sets $S$ of $G$ naturally correspond to
satisfying assignments of $\phi$. In particular, given an independent set
$S\subseteq V$ we can construct an assignment by setting, for all $i,j$,
$x_i^j$ to True if and only if $v_i\in S$; we then complete the assignment by
giving appropriate values to the $y_i$ variables so that the parity constraints
are satisfied. For the converse direction, we can extract an independent set
$S$ from a satisfying assignment by setting $v_i\in S$ if and only if the
assignment sets $x_i^1$ to True. We observe the $y_n$ is set to True in a
satisfying assignment if and only if the corresponding independent set has odd
size (indeed, for each $i$, $y_i$ is set to True if the intersection of the
independent set with the first $i$ vertices has odd size).

Suppose now that there exists an independent set $S$ of maximum size $k$ and
that $k$ is odd. Then, there exists a satisfying assignment of maximum weight
that sets $y_n$ to True. Indeed, suppose for contradiction that the maximum
satisfying assignment $\sigma$ sets $y_n$ to False. Then, the corresponding
independent set $S'$ must have even size $k'$. Since $k$ is odd and $S$ is a
maximum independent set, $k'<k$. But then, the weight of $\sigma$ is at most
$k'(n+1)+n < k(n+1)$. However, the assignment corresponding to $S$ has weight
at least $k(n+1)$, contradiction.

For the converse direction, suppose there exists a satisfying assignment
$\sigma$ of maximum weight that sets $y_n$ to True. The corresponding
independent set $S$ has odd size, say $|S|=k$. If there exists a maximum
independent set $S'$ that has even size $k'$, then $k'>k$. However, the
corresponding truth assignment $\sigma'$ would have weight at least $k'(n+1) >
k(n+1)+n$. Since $\sigma$ has weight at most $k(n+1)+n$ we get a contradiction
to the optimality of $\sigma$.

We conclude that there is a satisfying assignment to $\phi$ of maximum weight
that sets $y_n$ to True if and only if the maximum independent set of $G$ has
odd size. \end{proof}

\section{Reduction to Slater}

This section presents the main result of the paper, stated in Theorem
\ref{thm:slater}. The theorem is based on a reduction from the problem of Lemma
\ref{lem:start} to the problem of deciding if a vertex of a tournament is a
Slater winner. Before we dive into the proof, let us give some high level
intuition (we also invite the reader to take a look at Figure \ref{fig:reduction}).

We will build a tournament to represent a CNF formula $\phi$ with $n$ variables
and $m$ clauses by constructing $n$ groups of ``large'' modules
($A_i,B_i,C_i,D_i,E_i,F_i$, for $i\in \{1,\ldots,n\}$) and $m$ ``small''
modules $T_j$ for $j\in \{1,\ldots,m\}$. The internal structure of the modules
will be irrelevant and we only care about their ordering, which we may assume
to be contiguous thanks to Lemma \ref{lem:modules}.  We will make sure to
adjust the sizes of the modules and their connections so that we have the
following properties:

\begin{enumerate}

\item In any reasonable ordering, all six large modules representing variable $x_i$
come before the six modules representing $x_{i+1}$. This will naturally order
the large modules into $n$ sections.

\item Inside a section, any reasonable ordering will place $A_i, B_i, C_i$
first. Then, if the remaining modules are ordered $D_i\prec E_i\prec F_i$, this
encodes that $x_i$ is set to True.

\item Connections between $T_j$ and variable modules will be such that if $T_j$
is placed completely before or completely after the section of a variable
$x_i$, then the cost is the same. However, the cost may be lower if $T_j$ is
placed inside the section of $x_i$. In that case, we must check if $x_i$
appears in the clause $c_j$ in the original formula and the ordering of the
section of $x_i$ encodes an assignment to $x_i$ that satisfies $c_j$.

\item Variable modules are so large that the ordering must always encode a
satisfying assignment to the formula (which exists by assumption). The ordering
of $T_j$ modules among themselves is irrelevant.

\item In order to encode the weight of a satisfying assignment, we make $E_i$
modules slightly larger (we add $2$ extra vertices). Then, the ordering
$D_i\prec E_i\prec F_i$, which encodes that $x_i$ is True, is better than other
orderings that encode satisfying assignments. Hence, the optimal ordering will
represent a satisfying assignment to $\phi$ with maximum weight.

\item Finally, in order to encode that there is a special variable $x_n$ which
must be set to True, we add one extra vertex to $E_n$. This makes sure that
setting $x_n$ to True is more advantageous than setting any other variable to
True, but not more advantageous than setting two other variables to True.

\end{enumerate}

Armed with the intuition of the previous list, we are now ready to present all
the details of our reduction.

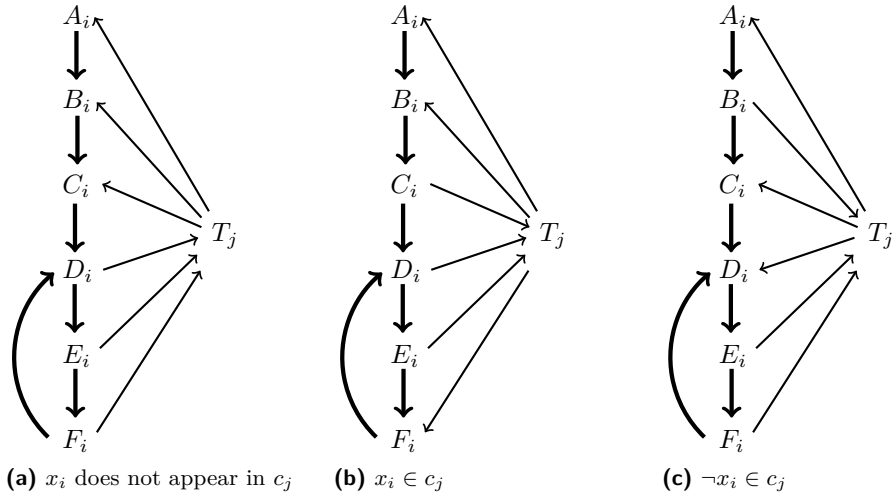
\begin{figure} 
\begin{subfigure}[b]{0.3\textwidth}
\ifx\du\undefined
  \newlength{\du}
\fi
\setlength{\du}{8\unitlength}
\begin{tikzpicture}[even odd rule]
\pgftransformxscale{1.000000}
\pgftransformyscale{-1.000000}
\definecolor{dialinecolor}{rgb}{0.000000, 0.000000, 0.000000}
\pgfsetstrokecolor{dialinecolor}
\pgfsetstrokeopacity{1.000000}
\definecolor{diafillcolor}{rgb}{1.000000, 1.000000, 1.000000}
\pgfsetfillcolor{diafillcolor}
\pgfsetfillopacity{1.000000}
\definecolor{dialinecolor}{rgb}{0.000000, 0.000000, 0.000000}
\pgfsetstrokecolor{dialinecolor}
\pgfsetstrokeopacity{1.000000}
\definecolor{diafillcolor}{rgb}{0.000000, 0.000000, 0.000000}
\pgfsetfillcolor{diafillcolor}
\pgfsetfillopacity{1.000000}
\node[anchor=base west,inner sep=0pt,outer sep=0pt,color=dialinecolor] at (14.000000\du,11.000000\du){$A_i$};
\definecolor{dialinecolor}{rgb}{0.000000, 0.000000, 0.000000}
\pgfsetstrokecolor{dialinecolor}
\pgfsetstrokeopacity{1.000000}
\definecolor{diafillcolor}{rgb}{0.000000, 0.000000, 0.000000}
\pgfsetfillcolor{diafillcolor}
\pgfsetfillopacity{1.000000}
\node[anchor=base west,inner sep=0pt,outer sep=0pt,color=dialinecolor] at (14.000000\du,15.000000\du){$B_i$};
\definecolor{dialinecolor}{rgb}{0.000000, 0.000000, 0.000000}
\pgfsetstrokecolor{dialinecolor}
\pgfsetstrokeopacity{1.000000}
\definecolor{diafillcolor}{rgb}{0.000000, 0.000000, 0.000000}
\pgfsetfillcolor{diafillcolor}
\pgfsetfillopacity{1.000000}
\node[anchor=base west,inner sep=0pt,outer sep=0pt,color=dialinecolor] at (14.000000\du,19.000000\du){$C_i$};
\definecolor{dialinecolor}{rgb}{0.000000, 0.000000, 0.000000}
\pgfsetstrokecolor{dialinecolor}
\pgfsetstrokeopacity{1.000000}
\definecolor{diafillcolor}{rgb}{0.000000, 0.000000, 0.000000}
\pgfsetfillcolor{diafillcolor}
\pgfsetfillopacity{1.000000}
\node[anchor=base west,inner sep=0pt,outer sep=0pt,color=dialinecolor] at (14.000000\du,23.000000\du){$D_i$};
\definecolor{dialinecolor}{rgb}{0.000000, 0.000000, 0.000000}
\pgfsetstrokecolor{dialinecolor}
\pgfsetstrokeopacity{1.000000}
\definecolor{diafillcolor}{rgb}{0.000000, 0.000000, 0.000000}
\pgfsetfillcolor{diafillcolor}
\pgfsetfillopacity{1.000000}
\node[anchor=base west,inner sep=0pt,outer sep=0pt,color=dialinecolor] at (14.000000\du,27.000000\du){$E_i$};
\definecolor{dialinecolor}{rgb}{0.000000, 0.000000, 0.000000}
\pgfsetstrokecolor{dialinecolor}
\pgfsetstrokeopacity{1.000000}
\definecolor{diafillcolor}{rgb}{0.000000, 0.000000, 0.000000}
\pgfsetfillcolor{diafillcolor}
\pgfsetfillopacity{1.000000}
\node[anchor=base west,inner sep=0pt,outer sep=0pt,color=dialinecolor] at (14.000000\du,31.000000\du){$F_i$};
\pgfsetlinewidth{0.200000\du}
\pgfsetdash{}{0pt}
\pgfsetbuttcap
{
\definecolor{diafillcolor}{rgb}{0.000000, 0.000000, 0.000000}
\pgfsetfillcolor{diafillcolor}
\pgfsetfillopacity{1.000000}
\pgfsetarrowsend{to}
\definecolor{dialinecolor}{rgb}{0.000000, 0.000000, 0.000000}
\pgfsetstrokecolor{dialinecolor}
\pgfsetstrokeopacity{1.000000}
\draw (14.650000\du,11.350000\du)--(14.650000\du,13.650000\du);
}
\pgfsetlinewidth{0.200000\du}
\pgfsetdash{}{0pt}
\pgfsetbuttcap
{
\definecolor{diafillcolor}{rgb}{0.000000, 0.000000, 0.000000}
\pgfsetfillcolor{diafillcolor}
\pgfsetfillopacity{1.000000}
\pgfsetarrowsend{to}
\definecolor{dialinecolor}{rgb}{0.000000, 0.000000, 0.000000}
\pgfsetstrokecolor{dialinecolor}
\pgfsetstrokeopacity{1.000000}
\draw (14.696800\du,15.345000\du)--(14.696800\du,17.645000\du);
}
\pgfsetlinewidth{0.200000\du}
\pgfsetdash{}{0pt}
\pgfsetbuttcap
{
\definecolor{diafillcolor}{rgb}{0.000000, 0.000000, 0.000000}
\pgfsetfillcolor{diafillcolor}
\pgfsetfillopacity{1.000000}
\pgfsetarrowsend{to}
\definecolor{dialinecolor}{rgb}{0.000000, 0.000000, 0.000000}
\pgfsetstrokecolor{dialinecolor}
\pgfsetstrokeopacity{1.000000}
\draw (14.581800\du,19.490000\du)--(14.581800\du,21.790000\du);
}
\pgfsetlinewidth{0.200000\du}
\pgfsetdash{}{0pt}
\pgfsetbuttcap
{
\definecolor{diafillcolor}{rgb}{0.000000, 0.000000, 0.000000}
\pgfsetfillcolor{diafillcolor}
\pgfsetfillopacity{1.000000}
\pgfsetarrowsend{to}
\definecolor{dialinecolor}{rgb}{0.000000, 0.000000, 0.000000}
\pgfsetstrokecolor{dialinecolor}
\pgfsetstrokeopacity{1.000000}
\draw (14.566800\du,23.285000\du)--(14.566800\du,25.585000\du);
}
\pgfsetlinewidth{0.200000\du}
\pgfsetdash{}{0pt}
\pgfsetbuttcap
{
\definecolor{diafillcolor}{rgb}{0.000000, 0.000000, 0.000000}
\pgfsetfillcolor{diafillcolor}
\pgfsetfillopacity{1.000000}
\pgfsetarrowsend{to}
\definecolor{dialinecolor}{rgb}{0.000000, 0.000000, 0.000000}
\pgfsetstrokecolor{dialinecolor}
\pgfsetstrokeopacity{1.000000}
\draw (14.601800\du,27.280000\du)--(14.601800\du,29.580000\du);
}
\pgfsetlinewidth{0.200000\du}
\pgfsetdash{}{0pt}
\pgfsetbuttcap
{
\definecolor{diafillcolor}{rgb}{0.000000, 0.000000, 0.000000}
\pgfsetfillcolor{diafillcolor}
\pgfsetfillopacity{1.000000}
\pgfsetarrowsstart{to}
\definecolor{dialinecolor}{rgb}{0.000000, 0.000000, 0.000000}
\pgfsetstrokecolor{dialinecolor}
\pgfsetstrokeopacity{1.000000}
\pgfpathmoveto{\pgfpoint{13.600151\du}{22.749869\du}}
\pgfpatharc{230}{134}{5.220081\du and 5.220081\du}
\pgfusepath{stroke}
}
\definecolor{dialinecolor}{rgb}{0.000000, 0.000000, 0.000000}
\pgfsetstrokecolor{dialinecolor}
\pgfsetstrokeopacity{1.000000}
\definecolor{diafillcolor}{rgb}{0.000000, 0.000000, 0.000000}
\pgfsetfillcolor{diafillcolor}
\pgfsetfillopacity{1.000000}
\node[anchor=base west,inner sep=0pt,outer sep=0pt,color=dialinecolor] at (19.400000\du,19.500000\du){};
\definecolor{dialinecolor}{rgb}{0.000000, 0.000000, 0.000000}
\pgfsetstrokecolor{dialinecolor}
\pgfsetstrokeopacity{1.000000}
\definecolor{diafillcolor}{rgb}{0.000000, 0.000000, 0.000000}
\pgfsetfillcolor{diafillcolor}
\pgfsetfillopacity{1.000000}
\node[anchor=base west,inner sep=0pt,outer sep=0pt,color=dialinecolor] at (20.985000\du,21.254800\du){$T_j$};
\pgfsetlinewidth{0.100000\du}
\pgfsetdash{}{0pt}
\pgfsetbuttcap
{
\definecolor{diafillcolor}{rgb}{0.000000, 0.000000, 0.000000}
\pgfsetfillcolor{diafillcolor}
\pgfsetfillopacity{1.000000}
\pgfsetarrowsstart{to}
\definecolor{dialinecolor}{rgb}{0.000000, 0.000000, 0.000000}
\pgfsetstrokecolor{dialinecolor}
\pgfsetstrokeopacity{1.000000}
\draw (15.500000\du,10.700000\du)--(20.850000\du,19.850000\du);
}
\pgfsetlinewidth{0.100000\du}
\pgfsetdash{}{0pt}
\pgfsetbuttcap
{
\definecolor{diafillcolor}{rgb}{0.000000, 0.000000, 0.000000}
\pgfsetfillcolor{diafillcolor}
\pgfsetfillopacity{1.000000}
\pgfsetarrowsstart{to}
\definecolor{dialinecolor}{rgb}{0.000000, 0.000000, 0.000000}
\pgfsetstrokecolor{dialinecolor}
\pgfsetstrokeopacity{1.000000}
\draw (15.603400\du,14.713400\du)--(20.500000\du,20.150000\du);
}
\pgfsetlinewidth{0.100000\du}
\pgfsetdash{}{0pt}
\pgfsetbuttcap
{
\definecolor{diafillcolor}{rgb}{0.000000, 0.000000, 0.000000}
\pgfsetfillcolor{diafillcolor}
\pgfsetfillopacity{1.000000}
\pgfsetarrowsstart{to}
\definecolor{dialinecolor}{rgb}{0.000000, 0.000000, 0.000000}
\pgfsetstrokecolor{dialinecolor}
\pgfsetstrokeopacity{1.000000}
\draw (15.855600\du,18.565600\du)--(20.500000\du,20.600000\du);
}
\pgfsetlinewidth{0.100000\du}
\pgfsetdash{}{0pt}
\pgfsetbuttcap
{
\definecolor{diafillcolor}{rgb}{0.000000, 0.000000, 0.000000}
\pgfsetfillcolor{diafillcolor}
\pgfsetfillopacity{1.000000}
\pgfsetarrowsend{to}
\definecolor{dialinecolor}{rgb}{0.000000, 0.000000, 0.000000}
\pgfsetstrokecolor{dialinecolor}
\pgfsetstrokeopacity{1.000000}
\draw (15.900900\du,22.610900\du)--(20.350000\du,21.100000\du);
}
\pgfsetlinewidth{0.100000\du}
\pgfsetdash{}{0pt}
\pgfsetbuttcap
{
\definecolor{diafillcolor}{rgb}{0.000000, 0.000000, 0.000000}
\pgfsetfillcolor{diafillcolor}
\pgfsetfillopacity{1.000000}
\pgfsetarrowsend{to}
\definecolor{dialinecolor}{rgb}{0.000000, 0.000000, 0.000000}
\pgfsetstrokecolor{dialinecolor}
\pgfsetstrokeopacity{1.000000}
\draw (15.748400\du,26.309300\du)--(20.300000\du,21.900000\du);
}
\pgfsetlinewidth{0.100000\du}
\pgfsetdash{}{0pt}
\pgfsetbuttcap
{
\definecolor{diafillcolor}{rgb}{0.000000, 0.000000, 0.000000}
\pgfsetfillcolor{diafillcolor}
\pgfsetfillopacity{1.000000}
\pgfsetarrowsend{to}
\definecolor{dialinecolor}{rgb}{0.000000, 0.000000, 0.000000}
\pgfsetstrokecolor{dialinecolor}
\pgfsetstrokeopacity{1.000000}
\draw (15.605700\du,30.275000\du)--(20.450000\du,22.650000\du);
}
\end{tikzpicture}\caption{$x_i$ does not appear in $c_j$}\end{subfigure}
\begin{subfigure}[b]{0.3\textwidth}
\ifx\du\undefined
  \newlength{\du}
\fi
\setlength{\du}{8\unitlength}
\begin{tikzpicture}[even odd rule]
\pgftransformxscale{1.000000}
\pgftransformyscale{-1.000000}
\definecolor{dialinecolor}{rgb}{0.000000, 0.000000, 0.000000}
\pgfsetstrokecolor{dialinecolor}
\pgfsetstrokeopacity{1.000000}
\definecolor{diafillcolor}{rgb}{1.000000, 1.000000, 1.000000}
\pgfsetfillcolor{diafillcolor}
\pgfsetfillopacity{1.000000}
\definecolor{dialinecolor}{rgb}{0.000000, 0.000000, 0.000000}
\pgfsetstrokecolor{dialinecolor}
\pgfsetstrokeopacity{1.000000}
\definecolor{diafillcolor}{rgb}{0.000000, 0.000000, 0.000000}
\pgfsetfillcolor{diafillcolor}
\pgfsetfillopacity{1.000000}
\node[anchor=base west,inner sep=0pt,outer sep=0pt,color=dialinecolor] at (14.000000\du,11.000000\du){$A_i$};
\definecolor{dialinecolor}{rgb}{0.000000, 0.000000, 0.000000}
\pgfsetstrokecolor{dialinecolor}
\pgfsetstrokeopacity{1.000000}
\definecolor{diafillcolor}{rgb}{0.000000, 0.000000, 0.000000}
\pgfsetfillcolor{diafillcolor}
\pgfsetfillopacity{1.000000}
\node[anchor=base west,inner sep=0pt,outer sep=0pt,color=dialinecolor] at (14.000000\du,15.000000\du){$B_i$};
\definecolor{dialinecolor}{rgb}{0.000000, 0.000000, 0.000000}
\pgfsetstrokecolor{dialinecolor}
\pgfsetstrokeopacity{1.000000}
\definecolor{diafillcolor}{rgb}{0.000000, 0.000000, 0.000000}
\pgfsetfillcolor{diafillcolor}
\pgfsetfillopacity{1.000000}
\node[anchor=base west,inner sep=0pt,outer sep=0pt,color=dialinecolor] at (14.000000\du,19.000000\du){$C_i$};
\definecolor{dialinecolor}{rgb}{0.000000, 0.000000, 0.000000}
\pgfsetstrokecolor{dialinecolor}
\pgfsetstrokeopacity{1.000000}
\definecolor{diafillcolor}{rgb}{0.000000, 0.000000, 0.000000}
\pgfsetfillcolor{diafillcolor}
\pgfsetfillopacity{1.000000}
\node[anchor=base west,inner sep=0pt,outer sep=0pt,color=dialinecolor] at (14.000000\du,23.000000\du){$D_i$};
\definecolor{dialinecolor}{rgb}{0.000000, 0.000000, 0.000000}
\pgfsetstrokecolor{dialinecolor}
\pgfsetstrokeopacity{1.000000}
\definecolor{diafillcolor}{rgb}{0.000000, 0.000000, 0.000000}
\pgfsetfillcolor{diafillcolor}
\pgfsetfillopacity{1.000000}
\node[anchor=base west,inner sep=0pt,outer sep=0pt,color=dialinecolor] at (14.000000\du,27.000000\du){$E_i$};
\definecolor{dialinecolor}{rgb}{0.000000, 0.000000, 0.000000}
\pgfsetstrokecolor{dialinecolor}
\pgfsetstrokeopacity{1.000000}
\definecolor{diafillcolor}{rgb}{0.000000, 0.000000, 0.000000}
\pgfsetfillcolor{diafillcolor}
\pgfsetfillopacity{1.000000}
\node[anchor=base west,inner sep=0pt,outer sep=0pt,color=dialinecolor] at (14.000000\du,31.000000\du){$F_i$};
\pgfsetlinewidth{0.200000\du}
\pgfsetdash{}{0pt}
\pgfsetbuttcap
{
\definecolor{diafillcolor}{rgb}{0.000000, 0.000000, 0.000000}
\pgfsetfillcolor{diafillcolor}
\pgfsetfillopacity{1.000000}
\pgfsetarrowsend{to}
\definecolor{dialinecolor}{rgb}{0.000000, 0.000000, 0.000000}
\pgfsetstrokecolor{dialinecolor}
\pgfsetstrokeopacity{1.000000}
\draw (14.650000\du,11.350000\du)--(14.650000\du,13.650000\du);
}
\pgfsetlinewidth{0.200000\du}
\pgfsetdash{}{0pt}
\pgfsetbuttcap
{
\definecolor{diafillcolor}{rgb}{0.000000, 0.000000, 0.000000}
\pgfsetfillcolor{diafillcolor}
\pgfsetfillopacity{1.000000}
\pgfsetarrowsend{to}
\definecolor{dialinecolor}{rgb}{0.000000, 0.000000, 0.000000}
\pgfsetstrokecolor{dialinecolor}
\pgfsetstrokeopacity{1.000000}
\draw (14.696800\du,15.345000\du)--(14.696800\du,17.645000\du);
}
\pgfsetlinewidth{0.200000\du}
\pgfsetdash{}{0pt}
\pgfsetbuttcap
{
\definecolor{diafillcolor}{rgb}{0.000000, 0.000000, 0.000000}
\pgfsetfillcolor{diafillcolor}
\pgfsetfillopacity{1.000000}
\pgfsetarrowsend{to}
\definecolor{dialinecolor}{rgb}{0.000000, 0.000000, 0.000000}
\pgfsetstrokecolor{dialinecolor}
\pgfsetstrokeopacity{1.000000}
\draw (14.581800\du,19.490000\du)--(14.581800\du,21.790000\du);
}
\pgfsetlinewidth{0.200000\du}
\pgfsetdash{}{0pt}
\pgfsetbuttcap
{
\definecolor{diafillcolor}{rgb}{0.000000, 0.000000, 0.000000}
\pgfsetfillcolor{diafillcolor}
\pgfsetfillopacity{1.000000}
\pgfsetarrowsend{to}
\definecolor{dialinecolor}{rgb}{0.000000, 0.000000, 0.000000}
\pgfsetstrokecolor{dialinecolor}
\pgfsetstrokeopacity{1.000000}
\draw (14.566800\du,23.285000\du)--(14.566800\du,25.585000\du);
}
\pgfsetlinewidth{0.200000\du}
\pgfsetdash{}{0pt}
\pgfsetbuttcap
{
\definecolor{diafillcolor}{rgb}{0.000000, 0.000000, 0.000000}
\pgfsetfillcolor{diafillcolor}
\pgfsetfillopacity{1.000000}
\pgfsetarrowsend{to}
\definecolor{dialinecolor}{rgb}{0.000000, 0.000000, 0.000000}
\pgfsetstrokecolor{dialinecolor}
\pgfsetstrokeopacity{1.000000}
\draw (14.601800\du,27.280000\du)--(14.601800\du,29.580000\du);
}
\pgfsetlinewidth{0.200000\du}
\pgfsetdash{}{0pt}
\pgfsetbuttcap
{
\definecolor{diafillcolor}{rgb}{0.000000, 0.000000, 0.000000}
\pgfsetfillcolor{diafillcolor}
\pgfsetfillopacity{1.000000}
\pgfsetarrowsstart{to}
\definecolor{dialinecolor}{rgb}{0.000000, 0.000000, 0.000000}
\pgfsetstrokecolor{dialinecolor}
\pgfsetstrokeopacity{1.000000}
\pgfpathmoveto{\pgfpoint{13.600151\du}{22.749869\du}}
\pgfpatharc{230}{134}{5.220081\du and 5.220081\du}
\pgfusepath{stroke}
}
\definecolor{dialinecolor}{rgb}{0.000000, 0.000000, 0.000000}
\pgfsetstrokecolor{dialinecolor}
\pgfsetstrokeopacity{1.000000}
\definecolor{diafillcolor}{rgb}{0.000000, 0.000000, 0.000000}
\pgfsetfillcolor{diafillcolor}
\pgfsetfillopacity{1.000000}
\node[anchor=base west,inner sep=0pt,outer sep=0pt,color=dialinecolor] at (19.400000\du,19.500000\du){};
\definecolor{dialinecolor}{rgb}{0.000000, 0.000000, 0.000000}
\pgfsetstrokecolor{dialinecolor}
\pgfsetstrokeopacity{1.000000}
\definecolor{diafillcolor}{rgb}{0.000000, 0.000000, 0.000000}
\pgfsetfillcolor{diafillcolor}
\pgfsetfillopacity{1.000000}
\node[anchor=base west,inner sep=0pt,outer sep=0pt,color=dialinecolor] at (20.985000\du,21.254800\du){$T_j$};
\pgfsetlinewidth{0.100000\du}
\pgfsetdash{}{0pt}
\pgfsetbuttcap
{
\definecolor{diafillcolor}{rgb}{0.000000, 0.000000, 0.000000}
\pgfsetfillcolor{diafillcolor}
\pgfsetfillopacity{1.000000}
\pgfsetarrowsstart{to}
\definecolor{dialinecolor}{rgb}{0.000000, 0.000000, 0.000000}
\pgfsetstrokecolor{dialinecolor}
\pgfsetstrokeopacity{1.000000}
\draw (15.500000\du,10.700000\du)--(20.850000\du,19.850000\du);
}
\pgfsetlinewidth{0.100000\du}
\pgfsetdash{}{0pt}
\pgfsetbuttcap
{
\definecolor{diafillcolor}{rgb}{0.000000, 0.000000, 0.000000}
\pgfsetfillcolor{diafillcolor}
\pgfsetfillopacity{1.000000}
\pgfsetarrowsstart{to}
\definecolor{dialinecolor}{rgb}{0.000000, 0.000000, 0.000000}
\pgfsetstrokecolor{dialinecolor}
\pgfsetstrokeopacity{1.000000}
\draw (15.603400\du,14.713400\du)--(20.500000\du,20.150000\du);
}
\pgfsetlinewidth{0.100000\du}
\pgfsetdash{}{0pt}
\pgfsetbuttcap
{
\definecolor{diafillcolor}{rgb}{0.000000, 0.000000, 0.000000}
\pgfsetfillcolor{diafillcolor}
\pgfsetfillopacity{1.000000}
\pgfsetarrowsend{to}
\definecolor{dialinecolor}{rgb}{0.000000, 0.000000, 0.000000}
\pgfsetstrokecolor{dialinecolor}
\pgfsetstrokeopacity{1.000000}
\draw (15.855600\du,18.565600\du)--(20.500000\du,20.600000\du);
}
\pgfsetlinewidth{0.100000\du}
\pgfsetdash{}{0pt}
\pgfsetbuttcap
{
\definecolor{diafillcolor}{rgb}{0.000000, 0.000000, 0.000000}
\pgfsetfillcolor{diafillcolor}
\pgfsetfillopacity{1.000000}
\pgfsetarrowsend{to}
\definecolor{dialinecolor}{rgb}{0.000000, 0.000000, 0.000000}
\pgfsetstrokecolor{dialinecolor}
\pgfsetstrokeopacity{1.000000}
\draw (15.900900\du,22.610900\du)--(20.350000\du,21.100000\du);
}
\pgfsetlinewidth{0.100000\du}
\pgfsetdash{}{0pt}
\pgfsetbuttcap
{
\definecolor{diafillcolor}{rgb}{0.000000, 0.000000, 0.000000}
\pgfsetfillcolor{diafillcolor}
\pgfsetfillopacity{1.000000}
\pgfsetarrowsend{to}
\definecolor{dialinecolor}{rgb}{0.000000, 0.000000, 0.000000}
\pgfsetstrokecolor{dialinecolor}
\pgfsetstrokeopacity{1.000000}
\draw (15.748400\du,26.309300\du)--(20.300000\du,21.900000\du);
}
\pgfsetlinewidth{0.100000\du}
\pgfsetdash{}{0pt}
\pgfsetbuttcap
{
\definecolor{diafillcolor}{rgb}{0.000000, 0.000000, 0.000000}
\pgfsetfillcolor{diafillcolor}
\pgfsetfillopacity{1.000000}
\pgfsetarrowsstart{to}
\definecolor{dialinecolor}{rgb}{0.000000, 0.000000, 0.000000}
\pgfsetstrokecolor{dialinecolor}
\pgfsetstrokeopacity{1.000000}
\draw (15.605700\du,30.275000\du)--(20.450000\du,22.650000\du);
}
\end{tikzpicture}\caption{$x_i\in c_j$}\end{subfigure}
\begin{subfigure}[b]{0.3\textwidth}
\ifx\du\undefined
  \newlength{\du}
\fi
\setlength{\du}{8\unitlength}
\begin{tikzpicture}[even odd rule]
\pgftransformxscale{1.000000}
\pgftransformyscale{-1.000000}
\definecolor{dialinecolor}{rgb}{0.000000, 0.000000, 0.000000}
\pgfsetstrokecolor{dialinecolor}
\pgfsetstrokeopacity{1.000000}
\definecolor{diafillcolor}{rgb}{1.000000, 1.000000, 1.000000}
\pgfsetfillcolor{diafillcolor}
\pgfsetfillopacity{1.000000}
\definecolor{dialinecolor}{rgb}{0.000000, 0.000000, 0.000000}
\pgfsetstrokecolor{dialinecolor}
\pgfsetstrokeopacity{1.000000}
\definecolor{diafillcolor}{rgb}{0.000000, 0.000000, 0.000000}
\pgfsetfillcolor{diafillcolor}
\pgfsetfillopacity{1.000000}
\node[anchor=base west,inner sep=0pt,outer sep=0pt,color=dialinecolor] at (14.000000\du,11.000000\du){$A_i$};
\definecolor{dialinecolor}{rgb}{0.000000, 0.000000, 0.000000}
\pgfsetstrokecolor{dialinecolor}
\pgfsetstrokeopacity{1.000000}
\definecolor{diafillcolor}{rgb}{0.000000, 0.000000, 0.000000}
\pgfsetfillcolor{diafillcolor}
\pgfsetfillopacity{1.000000}
\node[anchor=base west,inner sep=0pt,outer sep=0pt,color=dialinecolor] at (14.000000\du,15.000000\du){$B_i$};
\definecolor{dialinecolor}{rgb}{0.000000, 0.000000, 0.000000}
\pgfsetstrokecolor{dialinecolor}
\pgfsetstrokeopacity{1.000000}
\definecolor{diafillcolor}{rgb}{0.000000, 0.000000, 0.000000}
\pgfsetfillcolor{diafillcolor}
\pgfsetfillopacity{1.000000}
\node[anchor=base west,inner sep=0pt,outer sep=0pt,color=dialinecolor] at (14.000000\du,19.000000\du){$C_i$};
\definecolor{dialinecolor}{rgb}{0.000000, 0.000000, 0.000000}
\pgfsetstrokecolor{dialinecolor}
\pgfsetstrokeopacity{1.000000}
\definecolor{diafillcolor}{rgb}{0.000000, 0.000000, 0.000000}
\pgfsetfillcolor{diafillcolor}
\pgfsetfillopacity{1.000000}
\node[anchor=base west,inner sep=0pt,outer sep=0pt,color=dialinecolor] at (14.000000\du,23.000000\du){$D_i$};
\definecolor{dialinecolor}{rgb}{0.000000, 0.000000, 0.000000}
\pgfsetstrokecolor{dialinecolor}
\pgfsetstrokeopacity{1.000000}
\definecolor{diafillcolor}{rgb}{0.000000, 0.000000, 0.000000}
\pgfsetfillcolor{diafillcolor}
\pgfsetfillopacity{1.000000}
\node[anchor=base west,inner sep=0pt,outer sep=0pt,color=dialinecolor] at (14.000000\du,27.000000\du){$E_i$};
\definecolor{dialinecolor}{rgb}{0.000000, 0.000000, 0.000000}
\pgfsetstrokecolor{dialinecolor}
\pgfsetstrokeopacity{1.000000}
\definecolor{diafillcolor}{rgb}{0.000000, 0.000000, 0.000000}
\pgfsetfillcolor{diafillcolor}
\pgfsetfillopacity{1.000000}
\node[anchor=base west,inner sep=0pt,outer sep=0pt,color=dialinecolor] at (14.000000\du,31.000000\du){$F_i$};
\pgfsetlinewidth{0.200000\du}
\pgfsetdash{}{0pt}
\pgfsetbuttcap
{
\definecolor{diafillcolor}{rgb}{0.000000, 0.000000, 0.000000}
\pgfsetfillcolor{diafillcolor}
\pgfsetfillopacity{1.000000}
\pgfsetarrowsend{to}
\definecolor{dialinecolor}{rgb}{0.000000, 0.000000, 0.000000}
\pgfsetstrokecolor{dialinecolor}
\pgfsetstrokeopacity{1.000000}
\draw (14.650000\du,11.350000\du)--(14.650000\du,13.650000\du);
}
\pgfsetlinewidth{0.200000\du}
\pgfsetdash{}{0pt}
\pgfsetbuttcap
{
\definecolor{diafillcolor}{rgb}{0.000000, 0.000000, 0.000000}
\pgfsetfillcolor{diafillcolor}
\pgfsetfillopacity{1.000000}
\pgfsetarrowsend{to}
\definecolor{dialinecolor}{rgb}{0.000000, 0.000000, 0.000000}
\pgfsetstrokecolor{dialinecolor}
\pgfsetstrokeopacity{1.000000}
\draw (14.696800\du,15.345000\du)--(14.696800\du,17.645000\du);
}
\pgfsetlinewidth{0.200000\du}
\pgfsetdash{}{0pt}
\pgfsetbuttcap
{
\definecolor{diafillcolor}{rgb}{0.000000, 0.000000, 0.000000}
\pgfsetfillcolor{diafillcolor}
\pgfsetfillopacity{1.000000}
\pgfsetarrowsend{to}
\definecolor{dialinecolor}{rgb}{0.000000, 0.000000, 0.000000}
\pgfsetstrokecolor{dialinecolor}
\pgfsetstrokeopacity{1.000000}
\draw (14.581800\du,19.490000\du)--(14.581800\du,21.790000\du);
}
\pgfsetlinewidth{0.200000\du}
\pgfsetdash{}{0pt}
\pgfsetbuttcap
{
\definecolor{diafillcolor}{rgb}{0.000000, 0.000000, 0.000000}
\pgfsetfillcolor{diafillcolor}
\pgfsetfillopacity{1.000000}
\pgfsetarrowsend{to}
\definecolor{dialinecolor}{rgb}{0.000000, 0.000000, 0.000000}
\pgfsetstrokecolor{dialinecolor}
\pgfsetstrokeopacity{1.000000}
\draw (14.566800\du,23.285000\du)--(14.566800\du,25.585000\du);
}
\pgfsetlinewidth{0.200000\du}
\pgfsetdash{}{0pt}
\pgfsetbuttcap
{
\definecolor{diafillcolor}{rgb}{0.000000, 0.000000, 0.000000}
\pgfsetfillcolor{diafillcolor}
\pgfsetfillopacity{1.000000}
\pgfsetarrowsend{to}
\definecolor{dialinecolor}{rgb}{0.000000, 0.000000, 0.000000}
\pgfsetstrokecolor{dialinecolor}
\pgfsetstrokeopacity{1.000000}
\draw (14.601800\du,27.280000\du)--(14.601800\du,29.580000\du);
}
\pgfsetlinewidth{0.200000\du}
\pgfsetdash{}{0pt}
\pgfsetbuttcap
{
\definecolor{diafillcolor}{rgb}{0.000000, 0.000000, 0.000000}
\pgfsetfillcolor{diafillcolor}
\pgfsetfillopacity{1.000000}
\pgfsetarrowsstart{to}
\definecolor{dialinecolor}{rgb}{0.000000, 0.000000, 0.000000}
\pgfsetstrokecolor{dialinecolor}
\pgfsetstrokeopacity{1.000000}
\pgfpathmoveto{\pgfpoint{13.600151\du}{22.749869\du}}
\pgfpatharc{230}{134}{5.220081\du and 5.220081\du}
\pgfusepath{stroke}
}
\definecolor{dialinecolor}{rgb}{0.000000, 0.000000, 0.000000}
\pgfsetstrokecolor{dialinecolor}
\pgfsetstrokeopacity{1.000000}
\definecolor{diafillcolor}{rgb}{0.000000, 0.000000, 0.000000}
\pgfsetfillcolor{diafillcolor}
\pgfsetfillopacity{1.000000}
\node[anchor=base west,inner sep=0pt,outer sep=0pt,color=dialinecolor] at (19.400000\du,19.500000\du){};
\definecolor{dialinecolor}{rgb}{0.000000, 0.000000, 0.000000}
\pgfsetstrokecolor{dialinecolor}
\pgfsetstrokeopacity{1.000000}
\definecolor{diafillcolor}{rgb}{0.000000, 0.000000, 0.000000}
\pgfsetfillcolor{diafillcolor}
\pgfsetfillopacity{1.000000}
\node[anchor=base west,inner sep=0pt,outer sep=0pt,color=dialinecolor] at (20.985000\du,21.254800\du){$T_j$};
\pgfsetlinewidth{0.100000\du}
\pgfsetdash{}{0pt}
\pgfsetbuttcap
{
\definecolor{diafillcolor}{rgb}{0.000000, 0.000000, 0.000000}
\pgfsetfillcolor{diafillcolor}
\pgfsetfillopacity{1.000000}
\pgfsetarrowsstart{to}
\definecolor{dialinecolor}{rgb}{0.000000, 0.000000, 0.000000}
\pgfsetstrokecolor{dialinecolor}
\pgfsetstrokeopacity{1.000000}
\draw (15.500000\du,10.700000\du)--(20.850000\du,19.850000\du);
}
\pgfsetlinewidth{0.100000\du}
\pgfsetdash{}{0pt}
\pgfsetbuttcap
{
\definecolor{diafillcolor}{rgb}{0.000000, 0.000000, 0.000000}
\pgfsetfillcolor{diafillcolor}
\pgfsetfillopacity{1.000000}
\pgfsetarrowsend{to}
\definecolor{dialinecolor}{rgb}{0.000000, 0.000000, 0.000000}
\pgfsetstrokecolor{dialinecolor}
\pgfsetstrokeopacity{1.000000}
\draw (15.603400\du,14.713400\du)--(20.500000\du,20.150000\du);
}
\pgfsetlinewidth{0.100000\du}
\pgfsetdash{}{0pt}
\pgfsetbuttcap
{
\definecolor{diafillcolor}{rgb}{0.000000, 0.000000, 0.000000}
\pgfsetfillcolor{diafillcolor}
\pgfsetfillopacity{1.000000}
\pgfsetarrowsstart{to}
\definecolor{dialinecolor}{rgb}{0.000000, 0.000000, 0.000000}
\pgfsetstrokecolor{dialinecolor}
\pgfsetstrokeopacity{1.000000}
\draw (15.855600\du,18.565600\du)--(20.500000\du,20.600000\du);
}
\pgfsetlinewidth{0.100000\du}
\pgfsetdash{}{0pt}
\pgfsetbuttcap
{
\definecolor{diafillcolor}{rgb}{0.000000, 0.000000, 0.000000}
\pgfsetfillcolor{diafillcolor}
\pgfsetfillopacity{1.000000}
\pgfsetarrowsstart{to}
\definecolor{dialinecolor}{rgb}{0.000000, 0.000000, 0.000000}
\pgfsetstrokecolor{dialinecolor}
\pgfsetstrokeopacity{1.000000}
\draw (15.900900\du,22.610900\du)--(20.350000\du,21.100000\du);
}
\pgfsetlinewidth{0.100000\du}
\pgfsetdash{}{0pt}
\pgfsetbuttcap
{
\definecolor{diafillcolor}{rgb}{0.000000, 0.000000, 0.000000}
\pgfsetfillcolor{diafillcolor}
\pgfsetfillopacity{1.000000}
\pgfsetarrowsend{to}
\definecolor{dialinecolor}{rgb}{0.000000, 0.000000, 0.000000}
\pgfsetstrokecolor{dialinecolor}
\pgfsetstrokeopacity{1.000000}
\draw (15.748400\du,26.309300\du)--(20.300000\du,21.900000\du);
}
\pgfsetlinewidth{0.100000\du}
\pgfsetdash{}{0pt}
\pgfsetbuttcap
{
\definecolor{diafillcolor}{rgb}{0.000000, 0.000000, 0.000000}
\pgfsetfillcolor{diafillcolor}
\pgfsetfillopacity{1.000000}
\pgfsetarrowsend{to}
\definecolor{dialinecolor}{rgb}{0.000000, 0.000000, 0.000000}
\pgfsetstrokecolor{dialinecolor}
\pgfsetstrokeopacity{1.000000}
\draw (15.605700\du,30.275000\du)--(20.450000\du,22.650000\du);
}
\end{tikzpicture}\caption{$\neg x_i\in c_j$}\end{subfigure}
\caption{Gadgets of the reduction of Theorem \ref{thm:slater}. On the left of each figure the six large modules $A_i,B_i,C_i,D_i,E_i,F_i$ represent the variable $x_i$. Missing (thick) arcs go downwards, so the ordering is forced except for the last three modules. The depicted ordering $D_i\prec E_i\prec F_i$ encodes that $x_i$ is True. In the three figures we depict the connections between the six modules and the small module $T_j$ representing clause $c_j$ depending on whether $x_i$ appears in $c_j$. In the first case placing $T_j$ anywhere costs at least three arcs. In the second case, placing $T_j$ after $E_i$ costs two arcs, because setting $x_i$ to True satisfies $c_j$. In the last case, a similarly advantageous placement could be obtained by using the ordering $E_i\prec F_i\prec D_i$ (which encodes that $x_i$ is False) and putting $T_j$ before $D_i$.  }\label{fig:reduction}
\end{figure}

\begin{theorem}\label{thm:slater} The following problem is $\thtp$-complete: given a tournament
$T=(V,A)$ and a vertex $v\in V$, decide if $v$ is a Slater winner.
\end{theorem}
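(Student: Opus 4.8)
The plan is to prove the two directions separately, spending essentially all the effort on hardness. Membership in $\thtp$ is already known (Hudry \cite{CharonH10,Hudry10}), but I would re-derive it through the parallel-access characterisation $\thtp=\textrm{P}^{\textrm{NP}}_{||}$. I ask, non-adaptively and in parallel over all $k$, the two NP questions ``does $T$ admit a feedback arc set of size at most $k$?'' and ``does $T$ admit a feedback arc set of size at most $k$ in which $v$ is a sink?''. From the answers I recover the minimum fas size $\phi^*$ and the minimum size $s_v$ of a fas making $v$ a sink, and I accept iff $s_v=\phi^*$. All queries are non-adaptive, so the whole procedure runs in $\textrm{P}^{\textrm{NP}}_{||}=\thtp$.

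For hardness I reduce from the problem of \Cref{lem:start}. Given a 3-CNF formula $\phi$ with variables $x_1,\ldots,x_n$, clauses $c_1,\ldots,c_m$, satisfied by the all-False assignment and with distinguished variable $x_n$, I build the tournament sketched in \Cref{fig:reduction}: for each variable a block of six large modules $A_i,B_i,C_i,D_i,E_i,F_i$ and for each clause one small module $T_j$, so that by \Cref{lem:modules} every module may be treated as a contiguous, indivisible block. The inter-block arcs are oriented to force the $n$ variable-blocks into a fixed left-to-right order (property~1), and the arcs inside a block are oriented so that in any optimal ordering $A_i\prec B_i\prec C_i$ is forced while only two permutations of $D_i,E_i,F_i$ are cost-optimal: $D_i\prec E_i\prec F_i$ (read as $x_i=$~True) and $E_i\prec F_i\prec D_i$ (read as $x_i=$~False). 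Choosing the module sizes polynomially large guarantees that no optimal ordering can afford to deviate from this pattern, which is precisely what lets me read a truth assignment off any optimal ordering.

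The clause modules are then wired as in \Cref{fig:reduction} so that placing $T_j$ entirely before or after a block costs a fixed amount, whereas slipping $T_j$ inside the block of $x_i$ is strictly cheaper exactly when the chosen orientation of $D_i,E_i,F_i$ encodes a literal of $c_j$ set to True. Since $\phi$ is satisfiable and the blocks dominate the cost, every optimal ordering must encode a satisfying assignment, and each $T_j$ can then be parked inside the block of one of its satisfied literals (their mutual order being irrelevant). To couple the fas value with the Hamming weight I enlarge every $E_i$ by two vertices, so that the True-orientation saves a fixed amount per True variable; thus among satisfying assignments the optimal orderings are exactly the maximum-weight ones. Finally I add one further vertex to $E_n$, calibrated (property~6) so that the per-variable saving of setting $x_n$ True strictly exceeds that of any single other variable yet is strictly smaller than that of two other variables. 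Quantitatively the total saving of an ordering is then $2w+[x_n{=}\text{True}]$, where $w$ is the weight; since the bonus is smaller than one weight unit, weight is the primary objective and ``$x_n=$~True'' is merely a tie-breaker, so an optimal ordering sets $x_n$ True iff some maximum-weight satisfying assignment does.

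It remains to choose $v$ and close the equivalence. I take $v$ to be the designated sink of $F_n$. Because $D_n\prec E_n\prec F_n$ is the only cost-optimal orientation that puts $F_n$ last in its block, a minimum-fas ordering can have $v$ as a global sink iff it sets $x_n$ True and all $T_j$ precede $v$; the latter is always arrangeable, since each satisfied clause can host its module inside an earlier block (the cheap slot for a clause satisfied by $x_n$ lying immediately after $E_n$ and hence before $F_n$). Combining this with the accounting above, $v$ has Slater score equal to $\phi^*$ --- i.e.\ $v$ is a Slater winner --- iff $\phi$ has a maximum-weight satisfying assignment setting $x_n$ to True, the target question of \Cref{lem:start}. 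The main obstacle, where essentially all the work lies, is the quantitative calibration: one must fix explicit module sizes and arc multiplicities between the $T_j$ and the blocks so that, simultaneously and with polynomially bounded values, (a) deviating from the forced block pattern costs more than any saving obtainable elsewhere, (b) the two-versus-three extra vertices in the $E_i$'s make weight the dominant term with the $x_n$-bonus a strict but bounded tie-breaker, and (c) each $T_j$ contributes identically ``before'' and ``after'' a block while being cheaper ``inside'' exactly for satisfied literals. Checking this system of inequalities, rather than any conceptual leap, is the crux of the proof.
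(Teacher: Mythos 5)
Your proposal is correct and is essentially the paper's own proof: the same reduction from Lemma \ref{lem:start}, the same six large modules $A_i,\ldots,F_i$ per variable and small modules $T_j$ per clause, the same weight coupling (two extra vertices on each $E_i$, one further vertex on $E_n$ as a strict-but-bounded tie-breaker, giving the saving $2w+[x_n{=}\mathrm{True}]$ in units of $s_1$), the same candidate $v$ (the sink of $F_n$), and the same polynomial calibration of module sizes that the paper carries out via inequalities (\ref{eq:1})--(\ref{eq:3}). One nit: optimal orderings may also use the third cyclic order $F_i\prec D_i\prec E_i$ (its large-module cost equals that of $E_i\prec F_i\prec D_i$, so it can occur for a False variable whose clauses are satisfied elsewhere), so the extraction must read any order other than $D_i\prec E_i\prec F_i$ as False rather than assert only two permutations are cost-optimal --- a one-line fix that the paper makes explicit in its structural property~3.
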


\begin{proof}

We perform a reduction heavily inspired by the reduction of \cite{Conitzer06}
proving that computing the minimum fas of a tournament is NP-complete, though
we include a minor modification proposed by Bachmeier et al.
\cite{BachmeierBGHKPS19} which will later allow us to show that our instances
are realizable using seven voters. The main complication compared to the
reductions of \cite{Conitzer06,BachmeierBGHKPS19} is that we now need to encode
the CNF formula in a way that satisfying assignments of larger weight
correspond to orderings with better objective value.

We start with a formula $\phi$, as given in Lemma \ref{lem:start}.  Let
$x_1,\ldots, x_n$ be the variables of $\phi$ and suppose that the question is
whether there exists a satisfying assignment of maximum weight that sets $x_n$
to True.  Recall that by assumption the all-False assignment satisfies $\phi$.
Let $m$ be the number of clauses of $\phi$.

We define two numbers $s_1,s_2$ which satisfy the following properties:

\begin{eqnarray}
s_1^2 &>& (3n-1)ms_1s_2 + 3ns_1 + m^2s_2^2 + 9m(n-1)s_2 \label{eq:1} \\
s_1s_2 &>& 3ns_1 + m^2s_2^2 + 9m(n-1)s_2 \label{eq:2} \\
s_1 &>& m^2s_2^2 + 9m(n-1)s_2 \label{eq:3} 
\end{eqnarray}

For concreteness, set $s_2 = (n+m)^5$ and $s_1 = s_2^5 = (n+m)^{25}$ and the
above inequalities are easily satisfied when $n+m$ is sufficiently large.
Importantly, $s_1,s_2$ are polynomially bounded in $n+m$. Intuitively, the idea
is that $s_1,s_2$ are two very large numbers, and $s_1$ is significantly
larger. We will construct modules of size (roughly) $s_1$ or $s_2$, and the
values are chosen so that arcs between large modules will be very important,
arcs between large and small modules quite important, and arcs between small
modules almost irrelevant.

We now construct our tournament as follows: for each $i\in \{1,\ldots,n\}$ we
construct $6$ modules, call them $A_i, B_i, C_i, D_i, E_i, F_i$. Modules
$A_i,B_i,C_i,D_i,F_i$ have size $s_1$, while modules $E_i$ have size $s_1+2$ if
$i<n$, and the size of $E_n$ is $s_1+3$. Internally, each of these modules
induces a transitive tournament.  For $i<j$ we add all arcs from $A_i\cup
B_i\cup C_i\cup D_i\cup E_i\cup F_i$ to $A_j\cup B_j\cup C_j\cup D_j\cup
E_j\cup F_j$.  For each $i\in\{1,\ldots,n\}$ we add all possible arcs (i) from
$A_i$ to $B_i\cup C_i\cup D_i\cup E_i\cup F_i$ (ii) from $B_i$ to $C_i\cup
D_i\cup E_i\cup F_i$ (iii) from $C_i$ to $D_i\cup E_i\cup F_i$ (iv) from $D_i$
to $E_i$ (v) from $E_i$ to $F_i$ (vi) from $F_i$ to $D_i$.  The graph we have
constructed so far is a tournament with $n$ sections, each made up of $6$
modules. Each such section represents a variable $x_i$ and the sections are
linearly ordered. The structure inside each section is essentially the
transitive closure of $A_i\to B_i\to C_i\to D_i\to E_i\to F_i$ with the
exception that arcs between $D_i$ and $F_i$ are heading towards $D_i$.

We now complete the construction by adding to the current tournament some
vertices that represent the clauses of $\phi$. In particular, for each
$j\in\{1,\ldots,m\}$ we construct a module $T_j$ of size $s_2$ to represent the
$j$-th clause of $\phi$. Internally, $T_j$ is a transitive tournament. For
$j,j'\in\{1,\ldots,m\}$, the arcs between $T_j$ and $T_{j'}$ are set in an
arbitrary direction. What remains is to explain how the arcs between $T_j$ and
the modules representing the variables are set so as to encode the incidence of
variables with clauses. For each $i\in\{1,\ldots, n\}$ and $j\in\{1,\ldots,m\}$
we do the following:

\begin{enumerate}

\item If $x_i$ does not appear in the $j$-th clause we add all arcs from $T_j$
to $A_i\cup B_i\cup C_i$ and all arcs from $D_i\cup E_i\cup F_i$ to $T_j$.

\item If $x_i$ appears positive in the $j$-th clause we add all arcs from $T_j$
to $A_i\cup B_i \cup F_i$ and all arcs from $C_i\cup D_i\cup E_i$.

\item If $x_i$ appears negative in the $j$-th clause we add all arcs from $T_j$
to $A_i\cup C_i \cup D_i$ and all arcs from $B_i\cup E_i\cup F_i$.

\end{enumerate}

This completes the construction and the question we want to answer is whether
the last vertex (that is, the sink) of the transitive tournament induced by
$F_n$ is a Slater winner of the whole graph. 

We need to prove that the designated vertex is a Slater winner if and only if
there is a satisfying assignment for $\phi$ with maximum weight that sets $x_n$
to True. We will do this by establishing some properties regarding any optimal
ordering of the constructed tournament, showing that such an ordering must
always have a structure which implies a satisfying assignment of $\phi$ with
maximum weight.  We will rely heavily on Lemma \ref{lem:modules}, since the
tournament we have constructed can be decomposed into $6n+m$ modules, namely,
$A_i,B_i,C_i,D_i,E_i$, and $F_i$, for $i\in\{1,\ldots,n\}$, and $T_j$, for
$j\in\{1,\ldots,m\}$. We  therefore assume without loss of generality that
these sets are placed contiguously in an optimal ordering.

Let us first argue that any optimal ordering must have some desirable structure
which necessarily encodes a satisfying assignment for $\phi$. To do this it
will be helpful to start with a baseline ordering and calculate its implied
fas, as then any ordering which implies a larger fas will be necessarily
suboptimal. Consider the ordering which is defined as $A_i \prec B_i \prec C_i
\prec E_i \prec F_i \prec D_i$ for each $i\in\{1,\ldots,n\}$ and which sets
$D_i\prec A_{i+1}$, where each module is internally ordered in the optimal way.
We insert into this ordering of the modules that represent variables, the
modules $T_j$ as follows: for each $j\in\{1,\ldots,m\}$, we find a variable
$x_i$ that appears negative in the $j$-th clause (such a variable must exist,
since $\phi$ is satisfied by the all-False assignment), and place all of $T_j$
between $F_i$ and $D_i$. If for some pair $T_j, T_{j'}$ their relative ordering
is not yet fully specified, we order them in some arbitrary way. 

The arcs incompatible with the above ordering are (i) the at most $ns_1(s_1+3)$
arcs going from a module $D_i$ to a module $E_i$ (ii) for each $T_j$ that was
placed between $F_i$ and $D_i$ we have $2s_1s_2$ arcs (towards $A_i\cup C_i$),
as well as at most $3(s_1+3)s_2$ arcs to each other group $A_{i'}\cup
B_{i'}\cup C_{i'}\cup D_{i'}\cup E_{i'}\cup F_{i'}$, for $i'\neq i$ (iii) the
total number of arcs between modules $T_j$ is at most $m^2s_2^2$. Therefore, we
have that the fas implied by this ordering has size at most

\begin{eqnarray*}
B &\le& ns_1(s_1+3)+ m(2s_1s_2 + 3(n-1)(s_1+3)s_2) + m^2s_2^2= \\
 & =& ns_1^2 + (3n-1)ms_1s_2 + 3ns_1 + m^2s_2^2 +9m(n-1)s_2 
\end{eqnarray*}

In the remainder we will therefore only consider orderings which imply a fas of
size at most $B$, as other orderings are suboptimal. This allows us to draw
some conclusions regarding the structure of an optimal ordering. First, observe
that for each $i\in\{1,\ldots,n\}$, any ordering of $D_i\cup E_i\cup F_i$ will
contribute at least $s_1^2$ arcs to the fas. Using inequality (\ref{eq:1}), we
have that there are at most $n$ pairs of ``large'' modules (that is, modules of
size at least $s_1$) which are incorrectly ordered, that is, ordered so that
all arcs between the modules are included in the fas. Indeed, if there are
$n+1$ such pairs, the fas will have size at least $(n+1)s_1^2>B$. We conclude
that regarding the $6n$ large modules we must have the following ordering:

\begin{enumerate}

\item For each $i<j$, we have that all vertices of $A_i\cup B_i\cup C_i\cup
D_i\cup E_i\cup F_i$ (the section that represents the variable $x_i$) are
before all vertices of $A_j\cup B_j\cup C_j\cup D_j\cup E_j\cup F_j$ (the
section that represents the variable $x_j$).

\item For each $i\in\{1,\ldots,n\}$, we have $A_i\prec B_i\prec C_i$ and all
vertices of $A_i\cup B_i\cup C_i$ are before $D_i\cup E_i\cup F_i$.

\item For each $i\in\{1,\ldots,n\}$ we have $D_i\prec E_i\prec F_i$, or
$E_i\prec F_i\prec D_i$, or $F_i\prec D_i\prec E_i$.

\end{enumerate}

We would now like to construct a correspondence between assignments to $\phi$
and orderings of the tournament that respect the above conditions. On the one
hand, if we are given an assignment $\sigma$ we construct an ordering of the
variable sections as above and for each $i$, if $\sigma$ set $x_i$ to True we
set $D_i\prec E_i\prec F_i$, otherwise we set $E_i\prec F_i\prec D_i$. In the
converse direction, given an ordering that respects the above conditions (which
any optimal ordering must do), we extract an assignment by setting, for each
$i$, $x_i$ to True if  and only if $D_i\prec E_i\prec F_i$.

We now argue that the assignment corresponding to an optimal ordering must also
be satisfying for $\phi$, as otherwise the fas will have size strictly larger
than $B$, contradicting the optimality of the ordering. For the sake of
contradiction, suppose we have an optimal ordering which corresponds to an
assignment falsifying a clause. As argued above, there are at least $ns_1^2$
arcs in the fas contributed by the ordering of the large modules, so we
concentrate on the modules $T_j$ representing clauses. A module $T_j$
representing any clause must be incident on at least $(3n-1)s_1s_2$ arcs of the
fas connecting it to large modules. To see this, consider the following: we
will say that $T_j$ is in the interior of section $i$, if $A_i\prec T_j$ and
$T_j$ is placed before one of $D_i, E_i$, or $F_i$. $T_j$ can be in the
interior of at most one section $i$, so for each $i'\neq i$ we observe that at
least $3s_1s_2$ arcs incident on $T_j$ and modules of the group $i'$ are in the
fas.  This gives $3(n-1)s_1s_2$ arcs. In addition, no matter where we place
$T_j$ in the interior of section $i$, at least a further $2s_1s_2$ arcs of the
fas are obtained: if $T_j$ is after $C_i$, then we get the arcs to $A_i\cup
B_i$ or the arcs to $A_i\cup C_i$; if $T_j$ is between $A_i$ and $C_i$, we get
the arcs to $A_i$ and at least $2s_1s_2$ arcs from $D_i\cup E_i\cup F_i$.
Hence, we get at least $(3n-1)s_1s_2$ arcs in the fas for each $T_j$. 

Furthermore, suppose that the assignment corresponding to the ordering does not
satisfy the $j$-th clause. Then, we claim that at least $3ns_1s_2$ arcs
connecting $T_j$ to large modules are included in the fas. Indeed, if $T_j$ is
in the interior of section $i$, it can either be before or after $C_i$. If it
is before $C_i$, as we observed in the previous paragraph, we always have at
least $3s_1s_2$ arcs in the fas between $T_j$ and the large modules of section
$i$.  If $T_j$ is placed after $C_i$, we have the following cases: (i) if $x_i$
does not appear in the clause, then at least $3s_1s_2$ arcs between $T_j$ and
the large modules of section $i$ are in the fas (ii) if $x_i$ appears positive
in the $j$-th clause, we know that $F_i$ is not placed last in section $i$
(otherwise the assignment would satisfy the $j$-th clause), so wherever we
place $T_j$, at least $3s_1s_2$ arcs are included in the fas (iii) similarly if
$x_i$ appears negative, since the assignment does not satisfy the clause, $F_i$
is last, so again at least $3s_1s_2$ arcs are included in the fas.

From the above calculations, if the assignment that corresponds to an ordering
falsifies a clause, the fas has size at least $ns_1^2 +
(m-1)(3n-1)s_1s_2+3ns_1s_2 = ns_1^2 + (3n-1)ms_1s_2+s_1s_2>B$, where we used
inequality (\ref{eq:2}). We conclude that an optimal ordering must correspond
to a satisfying assignment.

We now need to argue that the assignment corresponding to an optimal ordering
of the tournament must be a satisfying assignment of maximum weight. Suppose
for contradiction that the assignment corresponding to an optimal ordering,
call it $\sigma_1$, sets $k$ variables to True, but there exists another
satisfying assignment, call it $\sigma_2$, that sets at least $k+1$ variables
to True. We will show that starting from $\sigma_2$ we can obtain a better
ordering of the tournament, contradicting the optimality of the original
ordering.

We claim that the ordering from which we extracted $\sigma_1$ includes at least
$ns_1^2 + (3n-1)ms_1s_2 + 2(n-k)s_1$ arcs in the fas. This is because in the
section corresponding to the $n-k$ variables that $\sigma_1$ sets to False,
either the arcs from $E_i$ to $F_i$, or the arcs from $D_i$ to $E_i$ are in the
fas (since $F_i$ is not placed last in the section), and these are at least
$s_1(s_1+2) = s_1^2+2s_1$ arcs.

We construct an ordering from $\sigma_2$ as follows: we order the variable
section in the normal way and inside each section, if
$\sigma_2(x_i)=\textrm{True}$ we use the ordering $D_i\prec E_i\prec F_i$,
otherwise we use the ordering $E_i\prec F_i\prec D_i$. For each $T_j$, we find
a variable $x_i$ that satisfies the $j$-th clause and place $T_j$ in section
$i$ immediately before the last module of this section. If for $j,j'$ the order
of $T_j, T_{j'}$ is not implied by the above, we set it arbitrarily. The fas
implied by this ordering has size at most 

\begin{eqnarray*} 
B' &\le& ns_1^2 + 2(n-k-1)s_1 + s_1 + m(2s_1s_2 + 3(n-1)(s_1+3)s_2) + m^2s_2^2  = \\
 &=& ns_1^2 + 2(n-k)s_1 + (3n-1)ms_1s_2 - s_1 + m^2s_2^2 + 9m(n-1)s_2
\end{eqnarray*}

Here, the calculations for the terms $m(2s_1s_2 + 3(n-1)(s_1+3)s_2) + m^2s_2^2$
are the same as in the calculation of $B$; the term $2(n-k-1)s_1$ takes into
account that there are $n-k-1$ sections that correspond to variables set to
False; and the $s_1$ term is due to the fact that $x_n$ may be one of the
variables set to False and $E_n$ has size $s_1+3$ and not $s_1+2$.  Using
inequality (\ref{eq:3}) we have that $-s_1 +m^2s_2^2+9m(n-1)s_2<0$, so the
ordering we have constructed from $\sigma_2$ is better than the one from which
we extracted $\sigma_1$, contradiction.

At this point we are almost done because we have argued that an optimal
ordering of the tournament corresponds to a satisfying assignment of maximum
weight and furthermore, since the correspondence sets $x_n$ to True if and only
if $F_n$ is the last module in the ordering, the sink of $F_n$ will be last in
the ordering if and only if the assignment sets $x_n$ to True. However, $\phi$
could have several satisfying assignments of the same weight, and since we have
set arcs between $T_j$ modules arbitrarily, it could be the case that a maximum
weight assignment that sets $x_n$ to False results in a better ordering, making
another vertex the Slater winner. This is the reason why we have set $E_n$ to
be slightly larger than all other modules $E_i$, so that setting $x_n$ to True
is always slightly more advantageous than setting any other variable to True.

Concretely, we argue the following: any optimal ordering of the tournament
corresponds to a satisfying assignment of $\phi$ with maximum weight; and
furthermore if a satisfying assignment of $\phi$ with maximum weight sets $x_n$
to True, then any optimal ordering places $F_n$ last. We need to argue the
second claim, so suppose for contradiction that an optimal ordering does not
place $F_n$ last and that the assignment that corresponds to this ordering is
$\sigma_1$. Furthermore, suppose that there exists a satisfying assignment
$\sigma_2$ of maximum weight that sets $x_n$ to True. Say that both $\sigma_1,
\sigma_2$ set $k$ variables to True.

We first observe that the ordering from which we extracted $\sigma_1$ implies a
fas of size at least $ns_1^2 + 2(n-k)s_1 + s_1 + (3n-1)ms_1s_2$. This is
because there are $(n-k-1)$ sections where the fas contains $s_1(s_1+2)$ arcs
incident on a module $E_i$, $k$ sections where the fas contains $s_1^2$ arcs
incident from $F_i$ to $D_i$, and in the section corresponding to $x_n$ the fas
contains $s_1(s_1+3)$ arcs, incident on $E_n$. 

On the other hand, if we construct an ordering from $\sigma_2$ in the same way
as we did previously, the fas obtained will have size at most

\begin{eqnarray*} 
B'' &\le& ns_1^2 + 2(n-k)s_1 + m(2s_1s_2 + 3(n-1)(s_1+3)s_2) + m^2s_2^2  = \\
 &=& ns_1^2 + 2(n-k)s_1 + (3n-1)ms_1s_2 + m^2s_2^2 + 9m(n-1)s_2
\end{eqnarray*}

Again, using inequality (\ref{eq:3}) which states that
$s_1>m^2s_2^2+9m(n-1)s_2$ we conclude that the new ordering is better,
contradicting the optimality of the original ordering.

We now summarize our arguments: we have shown that any optimal ordering of the
tournament always corresponds to a maximum weight satisfying assignmet of
$\phi$ and furthermore, it corresponds to a maximum weight satisfying
assignment that sets $x_n$ to True if this is possible; furthermore, if an
optimal ordering corresponds to an assignment that sets $x_n$ to True then the
last vertex of $F_n$ is a Slater winner. We therefore have two cases: if the
last vertex of $F_n$ is a Slater winner, then since optimal orderings give rise
to satisfying assignments of maximum weight, there is a maximum weight
satisfying assignment of $\phi$ setting $x_n$ to True; if the last vertex of
$F_n$ is not a Slater winner, then the maximum weight satisfying assignment we
extract from an optimal ordering sets $x_n$ to False, and there is no
satisfying assignment of the same weight setting $x_n$ to True.  We conclude
that determining if a vertex is a Slater winner is equivalent to deciding if
$\phi$ has a maximum weight satisfying assignment setting $x_n$ to True, and is
therefore $\thtp$-complete.  \end{proof}

\section{Hardness for $7$ Voters}

In this section we show that the tournaments constructed in Theorem
\ref{thm:slater} correspond to instances that could result from the aggregation
of the preferences of $7$ voters and as a result the problem of determining a
Slater winner remains $\thtp$-complete even for $7$ voters. Our approach
follows along the lines of the arguments of Bachmeier et al.
\cite{BachmeierBGHKPS19} who proved that determining the Slater winner is
NP-hard for $7$ voters. Indeed, the proof of \cite{BachmeierBGHKPS19} consists
of an analysis (and tweak) of the construction of Conitzer \cite{Conitzer06}
which establishes that the instances of the reduction can be built by
aggregating $7$ voter profiles.  Since our reduction is very similar to
Conitzer's, we essentially only need to adjust the arguments of Bachmeier et
al. to obtain $\thtp$-completeness.

Our first step is to slightly restrict the $\thtp$-complete problem that is the
starting point of our reduction. We present the following strengthening of
Lemma \ref{lem:start}, which is similar to the problem used as a starting point
in the reduction of \cite{BachmeierBGHKPS19}.

\begin{lemma}\label{lem:start2} The problem given in Lemma \ref{lem:start}
remains $\thtp$-complete under the following additional restrictions: (i) we
are given a partition of the clauses of $\phi$ in two sets $L,R$ and each
variable appears in at most one clause of $L$ and in at most two clauses of $R$
(ii) each literal appears at most once in a clause of $R$. \end{lemma}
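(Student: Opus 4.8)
The plan is to obtain the restricted instance by transforming the formula $\phi$ produced in Lemma~\ref{lem:start} with a standard occurrence-reduction (``variable splitting'') gadget, while taking care that the Hamming weight of satisfying assignments is preserved up to a uniform scaling factor. First I would compute, for each variable $z$ of $\phi$, the number $d_z$ of its occurrences, and set $D$ to be the maximum of the $d_z$. For each variable $z$ I would introduce $D$ fresh copies $z^{(1)},\ldots,z^{(D)}$ and chain them in an implication cycle, i.e.\ add the $2$-clauses $(\neg z^{(t)}\lor z^{(t+1)})$ for $t\in\{1,\ldots,D\}$ (indices taken modulo $D$). This cycle forces all copies of $z$ to take the same value in any satisfying assignment. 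I would then rewrite every original clause of $\phi$ by replacing each variable-occurrence with a distinct, so-far-unused copy of that variable, so that each copy is used in at most one rewritten clause.

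Next I would fix the partition: all rewritten original clauses go into $L$, and all copy-cycle clauses go into $R$. With this choice each copy $z^{(t)}$ appears in at most one clause of $L$ (its single rewritten clause), giving constraint (i) for $L$; and in exactly two clauses of $R$, namely once as the positive literal $z^{(t)}$ in $(\neg z^{(t-1)}\lor z^{(t)})$ and once as the negative literal $\neg z^{(t)}$ in $(\neg z^{(t)}\lor z^{(t+1)})$. Hence each copy appears in at most two clauses of $R$ and each literal appears in at most one clause of $R$, giving constraint (i) for $R$ and constraint (ii). The formula stays $3$-CNF (the cycle clauses have size $2$ and the rewritten clauses keep their original size), and the all-False assignment still satisfies it, since it satisfies every $(\neg z^{(t)}\lor z^{(t+1)})$ and, by mere renaming, every rewritten clause. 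I would designate as the new distinguished variable any single copy of the old distinguished variable $y_n$.

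The step I expect to be the crux is arguing that the \emph{weight} question is preserved. Because all copies of a variable are forced equal, satisfying assignments of the new formula are in bijection with those of $\phi$: an original variable $z$ set to True corresponds to all $D$ of its copies being True, contributing exactly $D$ to the weight. Since I padded every variable to the \emph{same} number $D$ of copies (using dummy copies that occur only in the cycle for variables with $d_z<D$), the weight of a new satisfying assignment equals $D$ times the weight of the corresponding assignment of $\phi$; this uniform padding is essential, as otherwise variables of differing occurrence-count would contribute unequally and the relative order of assignment weights could change. Consequently the maximum-weight satisfying assignments correspond exactly, and the designated copy of $y_n$ is True in some maximum-weight assignment of the new formula if and only if $y_n$ is True in some maximum-weight assignment of $\phi$. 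This yields a polynomial-time reduction from the problem of Lemma~\ref{lem:start} to the restricted problem, establishing $\thtp$-hardness; membership is inherited, since every restricted instance is, after discarding the partition, an instance of the already $\thtp$-complete problem of Lemma~\ref{lem:start}.
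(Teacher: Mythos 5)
Your proposal is correct and follows essentially the same route as the paper: split each variable into copies tied together by an implication cycle, place the cycle clauses in $R$ and the rewritten original clauses in $L$, and exploit the uniform number of copies per variable so that weights scale by a constant factor and maximum-weight assignments correspond. The only (immaterial) difference is bookkeeping: the paper makes exactly $m$ copies per variable and uses copy $y_i^j$ in the $j$-th clause, whereas you make $D=\max_z d_z$ copies and assign distinct copies to occurrences arbitrarily; both choices satisfy the occurrence restrictions and preserve the weight comparison.
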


\begin{proof}

Given a formula $\phi$ as in Lemma \ref{lem:start} we construct a new formula
$\phi'$ as follows. Let $x_1,x_2,\ldots,x_n$ be the variables of $\phi$ and $m$
be the number of its clauses. For each $x_i$, $i\in\{1,\ldots,n\}$ we construct
$m$ variables, call them $y_i^1,y_i^2,\ldots, y_i^m$. For each
$i\in\{1,\ldots,n\}$, for $j\in\{1,\ldots,m-1\}$ we construct the clause $(\neg
y_i^j \lor y_i^{j+1})$, as well as the clause $(\neg y_i^{m}\lor y_i^1)$. Let
$R$ be the set of clauses constructed so far and note that each literal appears
at most once and each variable at most twice in these clauses. Intuitively, the
clauses of $R$ ensure that for each $i$, all variables in the set $\{y_i^1,
y_i^2,\ldots, y_i^m\}$ must receive the same value in a satisfying assignment.

Now, we consider the clauses of $\phi$ one by one. If the $j$-th clause
contains the variable $x_i$, we replace it by the variable $y_i^j$. Doing this
for all clauses of $\phi$ we obtain a set of clauses, call it $L$, where each
variable appears at most once (assuming without loss of generality that clauses
of $\phi$ have no repeated literals).

If $x_n$ was the designated variable of $\phi$ we set $y_n^1$ as the designated
variable of $\phi'$. It is now not hard to make a correspondence between
satisfying assignments of $\phi$ and $\phi'$ ($x_i$ is set to True if all
$y_i^j$ are set to True) in a way that preserves weights (the weight of an
assignment to $\phi'$ is $m$ times the weight of the corresponding assignment
for $\phi$). Hence, determining if a maximum weight satisfying assignment to
$\phi'$ sets $y_n^1$ to True is $\thtp$-complete. Observe also that $\phi'$ is
satisfied by the all-False assignment.  \end{proof}

We now obtain the result of this section by starting the reduction of Theorem
\ref{thm:slater} from the problem of Lemma \ref{lem:start2}. In the statement
of the theorem below, when we say that a tournament $T=(V,A)$ can be obtained
from $7$ voters, we mean that there exist $7$ total orderings of $V$ such that
for all $(a,b)\in A$ we have that $a\prec b$ in at least $4$ of the orderings.

\begin{theorem} Determining if a vertex of a tournament is a Slater winner
remains $\thtp$-complete even for tournaments that can be obtained from $7$
voters. \end{theorem}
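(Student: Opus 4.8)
The plan is to run the reduction of Theorem~\ref{thm:slater} not from an arbitrary instance of Lemma~\ref{lem:start}, but from an instance $\phi'$ supplied by Lemma~\ref{lem:start2}, which additionally equips us with a partition of the clauses into sets $L,R$ such that each variable occurs in at most one clause of $L$ and at most two of $R$, and each literal at most once in an $R$-clause. Running the construction of Theorem~\ref{thm:slater} on $\phi'$ (with the designated variable $y_n^1$ of $\phi'$ playing the role of $x_n$) produces a tournament $T$ together with a designated vertex, the sink of $F_n$, and the entire correctness analysis of Theorem~\ref{thm:slater} applies unchanged: this vertex is a Slater winner if and only if $\phi'$ has a maximum-weight satisfying assignment setting $y_n^1$ to True. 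Hence the only new thing to establish is that $T$ can be obtained from $7$ voters, i.e.\ that there exist $7$ total orders of $V$ in which every arc $(a,b)\in A$ satisfies $a\prec b$ in at least $4$ of them; combined with the inherited correctness this yields $\thtp$-completeness for $7$ voters.

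To build the seven orders I would follow the scheme of Bachmeier et al.~\cite{BachmeierBGHKPS19}, who carried out exactly this task for Conitzer's tournament~\cite{Conitzer06}; since our tournament has essentially the same block structure, their template applies after minor adjustments. The backbone is a common ``forward'' order in which the sections appear as $1\prec 2\prec\cdots\prec n$ and, inside each section, $A_i\prec B_i\prec C_i$ precede the block $\{D_i,E_i,F_i\}$; a sufficiently large majority of the seven voters follow this backbone, so that every inter-section arc and every arc leaving $A_i,B_i,C_i$ is a $4$-majority. The only non-transitive feature of the variable gadget, the $3$-cycle $D_i\to E_i\to F_i\to D_i$, is realised by partitioning the seven voters into groups of sizes $3,2,2$ that use the three cyclic orders of the block $\{D_i,E_i,F_i\}$; using the same partition in every section makes all $n$ three-cycles majorities at once, each cyclic arc being supported by at least four voters. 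The clause modules $T_j$ are then inserted into the seven orders, with a controlled number of voters deviating from the backbone, so that the incidence arcs of Cases~1--3 become $4$-majorities: each $T_j$ is positioned according to the variable sections it interacts with, exactly as in~\cite{BachmeierBGHKPS19}, and the restrictions supplied by Lemma~\ref{lem:start2} (bounded occurrences of each variable in $L$ and in $R$, and distinct literals within $R$-clauses) are precisely what keeps the placements demanded by different clauses from over-constraining any single section across only seven voters.

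Two routine points complete the adaptation. First, our modules $E_i$ carry two (respectively three, for $i=n$) extra vertices compared with Conitzer's construction; these are harmless, since we include them inside the $E_i$ block with a fixed common internal order in all seven voters, which leaves every inter-module arc, and hence the whole scheme, unchanged. Second, every module-internal arc is made unanimous by fixing one common internal order per module, while the arcs between distinct clause modules $T_j,T_{j'}$ are unconstrained and may be set to whatever the chosen orders induce. The hard part will be the third step of the previous paragraph: certifying that a \emph{single} family of seven orders simultaneously realises the incidence arcs of \emph{all} clauses, rather than one clause at a time. This is exactly where the degree bounds of Lemma~\ref{lem:start2} are indispensable and where the explicit voter profiles of Bachmeier et al.\ must be reproduced and re-verified against our modified gadget; once this simultaneous realisation is checked, $T$ is obtained from seven voters and, together with the correctness inherited from Theorem~\ref{thm:slater}, the theorem follows.
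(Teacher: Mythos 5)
Your high-level framing agrees with the paper: run the reduction of Theorem~\ref{thm:slater} starting from the restricted instances of Lemma~\ref{lem:start2}, inherit the correctness analysis verbatim, and reduce the theorem to the claim that the constructed tournament is realizable by $7$ voters. However, that realizability claim \emph{is} the entire content of the proof, and you do not establish it: you sketch a scheme and then explicitly defer its crucial step ("the explicit voter profiles of Bachmeier et al.\ must be reproduced and re-verified"), which leaves the theorem unproven. This is a genuine gap, not a routine verification, because the scheme you sketch cannot work as described. In the constructed tournament, \emph{every} clause module $T_j$ must beat $A_i\cup B_i\cup C_i$ and lose to $D_i\cup E_i\cup F_i$ for \emph{every} section $i$ (up to the flips encoding occurrences). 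In any single linear order, $T_j$ can sit inside at most one section, so a "backbone" voter necessarily gets these arcs wrong for all but one section; no scheme in which a fixed majority of the seven voters follow a common backbone can deliver these arcs. The same objection applies to realizing the $3$-cycles by a $3{,}2{,}2$ split of cyclic orders \emph{and simultaneously} inserting the $T_j$'s with "controlled deviations": with only seven voters there is no slack left for the clause incidences, which is exactly the difficulty you acknowledge but do not resolve.

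The paper's construction is built around a cancellation trick that your sketch lacks. Voter $1$ fixes the forward order (sections in order, $A_i\prec B_i\prec\cdots\prec F_i$, then all $T_j$). Voters $2$ and $3$ are exact reversals of one another \emph{except} that they agree on the set $X_0=\left(\bigcup_j T_j\times\bigcup_i(A_i\cup B_i\cup C_i)\right)\cup\bigcup_i(F_i\times D_i)$; their disagreements cancel, voter $1$ breaks all remaining ties, and after three voters every inter-module arc of the "no occurrence" tournament holds with margin exactly one. Then two further canceling pairs (voters $4$--$5$ and $6$--$7$) agree precisely on arc sets $X_1$ and $X_2$ encoding the clause--variable incidences, flipping exactly the margin-one arcs that need flipping. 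The role of Lemma~\ref{lem:start2} is then concrete and verifiable: because each literal occurs at most once in $L$ and at most once in $R$, the sets $X_1,X_2$ are unions of stars (each active module points to a unique $T_j$), and such a set is realizable as the consensus of two otherwise opposite voters by placing each active module adjacent to its unique $T_j$ in both orders. Your proposal never identifies this margin-one-plus-canceling-pairs mechanism, which is the idea that makes seven voters suffice; without it, the claim that the tournament "can be obtained from $7$ voters" remains unsubstantiated.
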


\begin{proof}

We perform the same reduction as in Theorem \ref{thm:slater} except we start
from the special case given in Lemma \ref{lem:start2}. What remains is to show
that the instance we construct can result from aggregating $7$ orderings.
Recall that our tournament contains $6n$ modules representing the variables,
called $A_i, B_i, C_i, D_i, E_i, F_i$, for $i\in\{1,\ldots,n\}$ and $m$ modules
representing the clauses, called $T_j$, for $j\in\{1,\ldots,m\}$. Since modules
are internally transitive, we will assume that the $7$ voters have preferences
which agree with the directions of the arcs inside the modules and hence we
focus on the arcs between modules. Recall that in the reduction of Theorem
\ref{thm:slater}, arcs between modules $T_j$ are set arbitrarily. To ease
presentation, assume that when $j<j'$ we have the arcs $T_j\to T_{j'}$.

The first voter has preferences $A_1\prec B_1\prec C_1\prec D_1\prec E_1\prec
F_1 \prec A_2 \ldots \prec F_n \prec T_1\prec T_2 \prec\ldots\prec T_m$. In
other words, the first voter orders all the variable modules before all the
clause modules, orders variable groups according to their index, and inside
each variable group she has the ordering $A_i\prec B_i\prec C_i\prec D_i\prec
E_i \prec F_i$.

We now add two voters with the intent of constucting all the arcs of the set \[
X_0 = \left(\bigcup_{j} T_j \times \bigcup_{i} (A_i\cup B_i\cup C_i)\right)
\cup \bigcup_{i} (F_i\times D_i) \]

The first of these voters has ordering $(E_1\prec E_2\prec\ldots\prec E_n)
\prec (F_1\prec D_1\prec F_2\prec D_2\prec\ldots\prec F_n\prec D_n)\prec
(T_1\prec T_2\prec\ldots \prec T_m)\prec (A_1\prec B_1\prec C_1\prec A_2\prec
B_2\prec C_2\prec \ldots \prec A_n\prec B_n\prec C_n)$. The second of these
voters has ordering $(T_m\prec T_{m-1}\prec\ldots\prec T_1)\prec(C_n\prec
B_n\prec A_n\prec C_{n-1}\prec B_{n-1}\prec A_{n-1}\prec\ldots\prec C_1\prec
B_1\prec A_1)\prec (F_n\prec D_n\prec F_{n-1}\prec D_{n-1}\prec\ldots \prec
F_1\prec D_1)\prec(E_n\prec E_{n-1}\prec\ldots\prec E_1)$. Note that these two
voters agree that all modules of $\bigcup_j T_j$ come before all modules of
$\cup_i(A_i\cup B_i\cup C_i)$ and that for each $i$ we have $F_i\prec D_i$, but
disagree on every other pair of modules, hence the two voters together induce
exactly the set of arcs $X_0$ cited above.

If we now consider the three voters we have so far, we observe that much of our
construction is already induced:

\begin{enumerate}

\item For $i<i'$ we have arcs from $A_i\cup B_i\cup C_i\cup D_i\cup E_i\cup
F_i$ to $A_{i'}\cup B_{i'}\cup C_{i'}\cup D_{i'}\cup E_{i'}\cup F_{i'}$ because
of the preferences of the first voter, as the other two voters disagree on
these arcs.

\item For each $i\in\{1,\ldots,n\}$, inside the group $A_i\cup B_i\cup C_i\cup
D_i\cup E_i\cup F_i$, we have arcs that agree with the ordering $A_i\prec
B_i\prec C_i\prec D_i\prec E_i\prec F_i$, except that we have arcs from $F_i$
to $D_i$. This is because the second and third voter agree that $F_i\prec D_i$,
but disagree on every other pair (hence the preferences of the first voter
prevail for the other pairs).

\item For each $j<j'$ we have arcs from $T_j$ to $T_{j'}$, due to the
preferences of the first voter, as the other two disagree.

\item For each $i\in\{1,\ldots,n\}$ and $j\in\{1,\ldots,m\}$ we have arcs from
$T_j$ to $A_i\cup B_i\cup C_i$, because the second and third voter agree that
$T_j\prec (A_i\cup B_i\cup C_i)$ (though the first voter disagrees).

\item For each $i\in\{1,\ldots,n\}$ and $j\in\{1,\ldots,m\}$ we have arcs from
$D_i\cup E_i\cup F_i$ to $T_j$, because the second and third voter disagree on
these pairs, so the preferences of the first voter break the tie.

\end{enumerate}

We therefore have that the tournament that follows from aggregating the
preferences of the first three voters almost corresponds to the one we want to
construct, except that for each $i\in\{1,\ldots,n\}$ and $j\in\{1,\ldots,m\}$
the arcs between $T_j$ and $A_i\cup B_i\cup C_i\cup D_i\cup E_i \cup F_i$
correspond to the arcs we would want if $x_i$ did not appear in the $j$-th
clause (in other words, the three voters we have so far induce the general
structure of the construction, but do not encode which variable appears in
which clause). Furthermore, if we look at the relationship between any two
modules so far, the margin of victory is always exactly one (that is, there do
not exist two modules $X,Y$ such that all three voters agree that $X\prec Y$).

Hence, what remains is to use the four remaining voters to ``fix'' this, so
that if $x_i$ appears (positive or negative) in the $j$-th clause, we have the
arcs prescribed in the reduction of Theorem \ref{thm:slater}. We will achieve
this by giving two pairs of voters. Each pair of voters will disagree on all
pairs of modules except a specific set of arcs that we want to fix. Hence,
adding the pair of voters to the electorate will repair the arcs in question
(since the current margin of victory for all arcs is one), while leaving
everything else unchanged.

Recall that the clause set is given to us partitioned into two sets $R,L$ so
that each variable appears in at most one clause of $L$ and each literal in at
most one clause of $R$. We will use the slightly weaker property that each
literal appears at most once in each of $L,R$. Abusing notation we will write
$j\in R$ if the $j$-th clause is in $R$ (similarly for $j\in L$). We will also
write $x_i\in c_j$ (respectively $\neg x_i\in c_j$) if $x_i$ appears positive
(respectively negative) in the $j$-th clause. 

Consider now the following two sets of arcs:

\[ X_1 = \left(\bigcup_{j\in R} \cup_{i:x_i\in c_j} (T_j\times F_i) \cup
\cup_{i:\neg x_i\in c_j} (T_j\times D_i)\right) \cup \left(\bigcup_{j\in L}
\cup_{i:x_i\in c_j} (C_i\times T_j)\cup \cup_{i:\neg x_i\in c_j} (B_i\times
T_j)\right)\]

\[ X_2 = \left(\bigcup_{j\in L} \cup_{i:x_i\in c_j} (T_j\times F_i) \cup
\cup_{i:\neg x_i\in c_j} (T_j\times D_i)\right) \cup \left(\bigcup_{j\in R}
\cup_{i:x_i\in c_j} (C_i\times T_j)\cup \cup_{i:\neg x_i\in c_j} (B_i\times
T_j)\right)\]

Our plan is to give a pair of voters whose preferences induce the arcs of $X_1$
and another pair whose preferences induce the arcs of $X_2$. Here when we say
that two voters induce a set of arcs $X$ we mean that for each $(a,b)\in X$
both voters have $a\prec b$ and for each $(a,b)\not\in X$ one voter has $a\prec
b$ and the other has $b\prec a$. Before we proceed we observe that if $X_1,X_2$
are inducible by a pair of voters each, then adding these four voters to the
three voters we have described so far produces the tournament of Theorem
\ref{thm:slater}. Indeed, suppose that $x_i$ appears positive in clause $c_j$
and $j\in R$. Then, if we consider the arcs in the tournament induced by the
first three voters, we need to inverse the arcs between $T_j$ and $F_i$ (which
currently point $F_i\to T_j$), and the arcs between $T_j$ and $C_i$ (which
currently point $T_j\to C_i$). But the arcs between $T_j$ and $F_i$ are
inversed thanks to $X_1$, while the arcs between $T_j$ and $C_i$ are inversed
thanks to $X_2$, where we use the fact that $X_1, X_2$ represent the consensus
of two voters, while the margin of victory for any arc induced by the first
three voters is one. Similar arguments apply if $x_i$ appears negative in
$c_j$, or $j\in L$. Hence, if a pair of voters induces $X_1$ and another
induces $X_2$, taking the union of these four voters with the three voters we
have described produces the tournament of Theorem \ref{thm:slater} and
completes the proof.

We now recall that it was shown in \cite{BachmeierBGHKPS19} that $X_1, X_2$ are
inducible by two voters each, since these sets of arcs are unions of stars (if
we contract each module to a vertex).  Let us explain in more detail how to
represent $X_1$ as the union of the preferences of two voters (the arguments
for $X_2$ are essentially identical).  We will make use of the fact that each
literal appears at most once in $L$ and at most once in $R$. 

We will say that a module from a variable group is ``active'' if it is incident
on an arc of $X_1$.  In particular, modules $A_i, E_i$, for
$i\in\{1,\ldots,n\}$ are not active, and neither are modules $F_i$ such that
$x_i$ does not appear positive in $R$ (and similarly for $B_i, C_i, D_i$). We
will concentrate on the ordering of active modules because if we find two voter
profiles that order these modules in a way that induces $X_1$, we can add an
arbitrary ordering of the inactive modules in the beginning of the preferences
of the first voter, and the opposite of that ordering at the end of the
preferences of the second voter.  This will have as effect that the two voters
disagree on any pair that involves an element of an inactive module, as
desired.

We now observe that for each active module $M$ from $\bigcup_i B_i\cup C_i\cup
D_i\cup F_i$, there exists exactly one $T_j$ such that $M$ has arcs to $T_j$ in
$X_1$.  This is because every literal appears in at most one clause of $R$ and
at most one clause of $L$. Now, we construct two voter profiles as follows: one
voter orders the $T_j$ modules in increasing order of index and the other in
decreasing order. For each active module $D_i$ or $F_i$, we insert the module
immediately after the $T_j$ from which the module receives arcs in $X_1$ in
both orderings; for active modules $B_i$ or $C_i$ we insert them immediately
before the $T_j$ towards which the module has arcs in both orderings. Note that
this does not fully specify the ordering, as if two variables $x_i, x_{i'}$
appear in $c_j$ and $j\in R$, then we need to place $F_i$ and $F_{i'}$
immediately after $T_j$.  We resolve such conflicts by using an arbitrary
ordering of the active modules for the first voter and the opposite of that
ordering for the second voter, that is, all modules which are supposed to
appear immediately after $T_j$ are sorted in one way for the first voter and in
the opposite way for the second voter.  We now observe that with this ordering
for every active module the two voters agree about the arcs connecting the
module to its neighboring $T_j$, while we obtain no other arcs between the
module and any other $T_{j'}$ or any other active module. We therefore have two
voters whose preferences induce $X_1$.  \end{proof}

\bibliography{slater}

\end{document}